\documentclass[12pt]{article}
\usepackage{graphicx,multicol}
\usepackage{soul}
\usepackage{algorithm}
\usepackage{algpseudocode}

\usepackage[a4paper, left=0.7in, right=0.7in]{geometry} 

\usepackage[latin1]{inputenc}

\usepackage{color}
\usepackage{amsthm,amsmath}
\usepackage{amssymb}
\usepackage{amsfonts}
\usepackage{hyperref}
\hypersetup{colorlinks,allcolors=black}
\usepackage{graphicx}
\usepackage{tabularx}
\usepackage{lscape}
\usepackage{pdflscape}

\usepackage{adjustbox} 
\usepackage{array} 

\newtheorem{theorem}{Theorem}
\newtheorem{definition}{Definition}
\newtheorem{corollary}{Corollary}
\newtheorem{proposition}{Proposition}
\newtheorem{lemma}{Lemma}
\newtheorem{example}{Example}
\newtheorem{remark}{Remark}

\date{ }

\author{Atif Ahmad Khan$^{1,}\footnote{Corresponding author }$,   Shakir Ali$^{2}$  , and Bhupendra Singh$^3$ \\
		\small{$^1$Department of Mathematics, Faculty of Science, }\\
	\small{Aligarh Muslim University, Aligarh-202002, India }\\
	\small{$^2$Department of Mathematics, Faculty of Science, }\\
	\small{Aligarh Muslim University, Aligarh-202002, India and }\\
	\small{Faculty of Mathematics and Natural Sciences, Universitas Gadjah Mada, Yogyakarta-55281, Indonesia}\\
	\small{$^3$Centre for Artificial Intelligence \& Robotics, DRDO, Bengaluru, Karnataka-560093, India }\\
	\small{atifkhanalig1997@gmail.com, shakir.ali.mm@amu.ac.in,  bhusinghdrdo@gmail.com}}

\begin{document}
\title{\textbf{On the generalization of $g$-circulant MDS matrices}}
\maketitle
	\begin{abstract}
A matrix $M$ over the finite field $ \mathbb{F}_q $ is called \emph{maximum distance separable} (MDS) if all of its square submatrices are non-singular. These MDS matrices are very important in cryptography and coding theory because they provide strong data protection and help spread information efficiently.  In this paper, we introduce a new type of matrix called a \emph{consta-$g$-circulant matrix}, which extends the idea of $g$-circulant matrices. These matrices come from a linear transformation defined by the polynomial 
	$
	h(x) = x^m - \lambda + \sum_{i=0}^{m-1} h_i x^i
	$
	over $ \mathbb{F}_q $. We find the upper bound of  such matrices exist and give conditions to check when they are invertible. This helps us know when they are MDS matrices. If the polynomial $ x^m - \lambda $ factors as 
	$
	x^m - \lambda = \prod_{i=1}^{t} f_i(x)^{e_i},
	$
	where each \( f_i(x) \) is irreducible, then the number of invertible consta-$g$-circulant matrices is
	$
	N \cdot \prod_{i=1}^{t} \left( q^{\deg f_i} - 1 \right),
	$
	where $r$ is the multiplicative order of $\lambda$, and \( N \) is the number of integers \( k \) such that 
	$
	0 \leq k < \left\lfloor \frac{m - 1}{r} \right\rfloor + 1 \quad \text{and} \quad \gcd(1 + rk, m) = 1.
	$
	This formula help us to reduce the number of cases to check whether such matrices is MDS. Moreover, we give complete characterization of $g$-circulant MDS matrices of order 3 and 4. Additionally, inspired by skew polynomial rings, we construct a new variant of $g$-circulant matrix. In the last, we provide some examples related to our findings.

	\end{abstract}
\noindent\textit{Keywords: }{Circulant matrices, $g$-circulant matrices, consta-$g$-circulant matrices, involutory, MDS matrices}   \\
\textit{2020 Mathematics Subject Classification: }{94A60, 12E20, 15A99, 15B33, 11T71} 
\section{Introduction}

The notion of confusion and diffusion with respect to design of encryption systems were first introduced by  Shannon in his seminal paper ``Communication Theory of Secrecy Systems" \cite{shannon1949communication}. The goal of the confusion layer is to hide the relationship between the key and the ciphertext, while the diffusion layer hides the relationship between the ciphertext and plaintext. When used in an iterated block cipher with layers of confusion and diffusion, these properties ensure that every message  and  secret-key bit have a non-linear impact on every bit of the ciphertext. The choice of diffusion layer influences both the security and efficiency of  cryptographic primitives. The diffusion layer plays an important role in providing security against differential cryptanalysis and linear cryptanalysis, as highlighted in \cite{daemen2002design}.  Among the attributes of a diffusion layer, the branch number \cite{daemen2002design} plays a critical role in estimating the diffusion property. Optimal diffusion can be achieved by multipermutation, as demonstrated in \cite{schnorr1994black, vaudenay1994need}, or using Maximum Distance Separable matrices \cite{daemen2002design}.

MDS matrices are widely used in many block ciphers and hash functions. For example, ciphers like AES \cite{daemen2002design} and SHARK \cite{rijmen1996cipher}, etc., and hash functions like PHOTON \cite{guo2011photon}  use MDS matrices in their diffusion layers. In many block ciphers, if $M$ is the MDS matrix used during encryption process, then $M^{-1}$ is used in decryption. Therefore, it is  important to use MDS matrices whose inverse is easy to compute and efficient to implement in software/hardware, unlike  Feistel based construction, which does not require inverse linear transformation. For the purpose of reducing implementation costs, research focuses on  circulant matrices or their variants and recursive MDS matrices. Now, we review some studies related to these two types of matrices.

Among various matrix structures, circulant matrices have garnered significant attention due to their computational efficiency and algebraic properties. A circulant matrix is a unique type of matrix where each row vector is rotated one element to the right in relation to the previous row vector. The circulant matrix of order $m$ has the advantageous of having a maximum of $m$ different components, so it is better to store $m$ elements rather than $m^2$ elements. This inherent structure makes circulant matrices highly desirable in applications where memory and computational efficiency are crucial. In particular, circulant MDS matrices have been widely studied for their role in the MixColumns transformation of the Advanced Encryption Standard (AES) \cite{daemen2002design}. Extensive research has been conducted on circulant MDS matrices over finite fields and finite rings, leading to significant theoretical advancements \cite{adhiguna2022orthogonal, ali2024circulant, ali2024xor, cauchois2019circulant, gupta2015cryptographically, kesarwani2021recursive}. Notably, Gupta et al. \cite{gupta2015cryptographically} proved in 2014 that neither orthogonal circulant MDS matrices of order $2^k$ nor involutory circulant MDS matrices of size $k$ (for $k \geq 3$) exist over finite fields of characteristic 2. Cauchois \cite{cauchois2019circulant} developed the algebraic framework of circulant matrices. However, in 2021, Kesarwani et al. \cite{ kesarwani2019exhaustive} performed an  exhaustive search for circulant MDS matrices over $GL(4, \mathbb{F}_2)$, specifically for  orders 4, 5, 6, 7, and 8. In \cite{kesarwani2021recursive}, Kesarwani et al.\ studied the construction of MDS matrices over finite rings.
Furthermore, Ali et al. \cite{ali2024circulant} demonstrated the nonexistence of circulant involutory MDS matrices over finite commutative rings of characteristic 2. These results highlight the inherent structural limitations of circulant matrices in cryptographic applications.

 In 1961, Friedman \cite{friedman1961eigenvalues} introduced $g$-circulant matrices as a generalization of the circulant matrix. After the several authors \cite{ablow1963roots,Davis1979Circulant, friedman1961eigenvalues} studied the $g$-circulant matrices and their properties. In \cite{friedman1961eigenvalues}, a representation of $g$-circulant matrices using permutation matrices is provided. Note that, in \cite{gupta2015cryptographically}, authors investigated MDS circulant matrices using representation of circulant matrix in terms of permutation matrix. Subsequently, in 2016, Liu and Sim proved that it is possible to construct left-circulant involutory matrices with the MDS property. They also introduced the notion of cyclic matrices and a connection between cyclic and circulant matrices. Importantly, circulant matrices are known for their efficient implementation. In \cite{liu2016lightweight}, both round-based and serial-based implementation strategies were proposed for circulant matrices. A similar implementation strategy extends naturally to cyclic matrices due to their relation to circulant ones. In serial-based implementation, we only need to store and apply the first row of the cyclic matrix. After computing the first output component, the input vector is updated by a corresponding permutation, and the same row is reapplied to compute the next component. This process is repeated until the full output vector is obtained. Thus, cyclic matrices, like circulant ones, support lightweight and efficient implementation using minimal resources.

Following this direction, the authors in \cite{liu2016lightweight} applied computational search methods to show that involutory cyclic MDS matrices do not exist for orders 4 and 8. In a recent study \cite{Chatterjee2024OnMDS}, Chatterjee and Laha\ investigated 
$g$-circulant MDS matrices. In related works \cite{Chatterjee2024ANoteonMDS2, Chatterjee2024ACharacterization}, the authors examined circulant matrices characterized by semi-involutory and semi-orthogonal properties. Furthermore, Chatterjee and Laha \cite{Chatterjee2024Anote} proved several non-existence results concerning 
$g$-circulant MDS matrices over finite fields

Motivated by the above recent developments of $g$-circulant matrices and the study of consta-cyclic codes~\cite{Bakshi2012Aclass}, this paper introduces a new class of matrices called consta-$g$-circulant matrices. These matrices generalize both circulant and $g$-circulant matrices by incorporating a fixed nonzero element $\lambda \in \mathbb{F}_q$ that alters the usual row-shifting mechanism. Each consta-$g$-circulant matrix is associated with a polynomial of the form
\[
h(x) = x^m - \lambda + \sum_{i=0}^{m-1} h_i x^i, \quad \text{where } h_0, h_1, \dots, h_{m-1} \in \mathbb{F}_q.
\]
We derive an upper bound on the number of such matrices and identify the conditions for their invertibility. These findings reduce the computational complexity involved in searching for suitable consta-$g$-circulant  matrices, particularly in cryptographic contexts. An explicit count of invertible matrices is also given, supplemented by a parameter table. We further study when these matrices are both \emph{involutory} and  MDS. 

Building upon the theory of skew polynomial rings, we introduce a more general class of matrices called consta-$\theta_g$-circulant matrices, which involve a field automorphism $\theta$. We analyze when these matrices also exhibit MDS and involutory properties. In addition, we explore \emph{semi-involutory matrices}, where the inverse takes the form $A^{-1} = DAD$, with $D$ being a fixed diagonal matrix over $\mathbb{F}_q$. This form offers practical benefits in terms of both structure and efficiency. We conclude the paper with several concrete examples.

The rest of the paper is organized as follows. Section~2 covers the background on circulant and \(g\)-circulant matrices over finite fields and explains their relevance in achieving strong diffusion. In Section~3, we introduce consta-\(g\)-circulant matrices, discuss their main properties, and give conditions for when they are invertible. We also provide a count of how many such matrices exist. In Section 4, we provide the complete characterization of consta-$g$-circulant MDS mattrices of order 3 and 4. Section~5 focuses on consta-\(\theta_g\)-circulant matrices, which are built using automorphisms, and presents related results. We end our discussion by providing representative examples to support the results.

\section{\textbf{Notations and Preliminaries}}
We use the following notations throughout the paper:
\begin{table}[h]
	\centering
	\begin{tabular}{c p{10cm}}
		\hline
		$n,~m,~l,~k,~d,~t$& positive integers.\\
		\hline
		$\mathbb{F}_{q}$ &  the finite field of characteristic $p$ with $q$ elements, where $q$ is a  power of $p$. \\
		\hline
		$\mathbb{F}^n_{q}$&  the linear space over $\mathbb{F}_{q}$ of dimension $n$.\\
		\hline
		$M_{m\times n}(\mathbb{F}_{q})$ & the set of all   matrices of order $m\times n$  over   $\mathbb{F}_{q}$. \\
		\hline
		$GL(m,\mathbb{F}_q)$&  the general linear group consisting of all $m \times m$ invertible matrices over $\mathbb{F}_q$. \\
		\hline
		ord($\lambda$) & the multiplicative order of $\lambda \textup{ in } \mathbb{F}^*_{q}$\\
		\hline
		$\mathbb{F}_q[x]$& the polynomial ring \\
		\hline
		$\mathbb{F}_q[X;\theta]$& the skew polynomial ring, where $\theta:\mathbb{F}_q\longrightarrow \mathbb{F}_q$ is an automorphism. \\
		\hline
		$Q(x)$ &a polynomial over $\mathbb{F}_{q}$\\
		\hline
		$Q\langle X \rangle$ & a skew polynomial over $\mathbb{F}_{q}$. \\
		\hline
		$\#$&cardinality of a set.\\
		\hline
		\hline
	\end{tabular}
\end{table}\\
 In this section, we discuss some definitions and mathematical preliminaries that are
important in our context.  Now, we start with the definition of MDS code.

\begin{definition}\cite[ Definition 1.5 ]{dougherty2017algebraic}
	Let $\mathcal{C}$ be a \([n, m, d]\) linear code over \( \mathbb{F}_q \). Then, $\mathcal{C}$ is \emph{MDS} if its minimum distance \( d \) satisfies the Singleton Bound:
	\[
	d = n - m + 1.
	\]
\end{definition}

\begin{lemma}\cite{macwilliams1977theory}
	An $[n,m,d]$ code $\mathcal{C}$ with the generator matrix $G = [I \mid A]$, where $A\in M_{m\times {n-m}}(\mathbb{F}_q)$, is called \textit{MDS} if and only if every $i \times i$ submatrix of $A$ is non-singular, for all $i = 1, 2, \ldots, \min(m,n-m)$.
\end{lemma}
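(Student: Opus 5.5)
We prove the classical criterion: a linear code $\mathcal{C}$ with generator matrix $G=[I\mid A]$, $A\in M_{m\times(n-m)}(\mathbb{F}_q)$, is MDS, i.e.\ $d=n-m+1$, if and only if every square submatrix of $A$ is non-singular. The plan is to count the zeros in an arbitrary codeword $xG=(x,\,xA)$, $x\in\mathbb{F}_q^m\setminus\{0\}$, and relate a low-weight codeword to a singular square submatrix of $A$.

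For the direction ``every square submatrix non-singular $\Rightarrow$ MDS'', take $x\neq 0$ with exactly $w\ge 1$ nonzero entries, supported on a row set $R\subseteq\{1,\dots,m\}$ with $|R|=w$. The weight of $xG$ is $w$ plus the number of nonzero entries of $xA$. Suppose $xA$ had at least $w$ zero coordinates, say on a column set $C$ with $|C|=w$; this requires $w\le n-m$, and if $w>n-m$ the weight is already at least $w\ge n-m+1$. Then $x_R A_{R,C}=0$ with $x_R\neq 0$, so the $w\times w$ submatrix $A_{R,C}$ is singular, a contradiction. Hence $xA$ has at most $w-1$ zeros, and
\[
\mathrm{wt}(xG)\ \ge\ w+(n-m)-(w-1)\ =\ n-m+1 .
\]
So $d\ge n-m+1$. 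Combined with the Singleton bound $d\le n-m+1$ (the cited definition), this gives $d=n-m+1$.

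For the converse, suppose some $i\times i$ submatrix $A_{R,C}$ is singular, where $1\le i\le\min(m,n-m)$. Choose a nonzero $y\in\mathbb{F}_q^{i}$ with $yA_{R,C}=0$, and extend it by zeros to $x\in\mathbb{F}_q^m$ supported in $R$. Then $x$ has at most $i$ nonzero entries, and $xA$ vanishes on the $i$ columns in $C$. Therefore
\[
\mathrm{wt}(xG)\ \le\ i+(n-m-i)\ =\ n-m\ <\ n-m+1 ,
\]
so $\mathcal{C}$ is not MDS.

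The argument is elementary. The only delicate point is the index bookkeeping in the first direction: the support of $x$ may have size $w$ exceeding $n-m$, and then the submatrix argument is unavailable, so that case must be handled separately by the trivial bound $\mathrm{wt}(xG)\ge w$, as done above. This is also why the statement restricts to $i\le\min(m,n-m)$.
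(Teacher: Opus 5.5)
Your proof is correct and complete. The case split at $w>n-m$ is handled properly, and the converse produces a genuine nonzero codeword of weight at most $n-m$.

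The paper gives no proof of its own; it quotes the lemma from MacWilliams--Sloane. The natural comparison is therefore with the standard textbook proof, which goes in two steps:
\begin{itemize}
\item First, a linear $[n,m]$ code is MDS iff every $m$ columns of a generator matrix are linearly independent, i.e.\ every $m$-subset of coordinates is an information set.
\item Second, for $G=[I\mid A]$, take $m-i$ columns of $I$ and a set $C$ of $i$ columns of $A$. Expanding the determinant along the identity columns gives $\pm\det A_{R,C}$, where $R$ is the set of $i$ rows not hit by the chosen identity columns. Every square submatrix of $A$ arises this way.
\end{itemize}

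You instead unroll the column-independence criterion directly at the level of codewords. A nonzero codeword $(x,xA)$ of weight at most $n-m$ becomes a left kernel vector of $A_{R,C}$, with $R=\operatorname{supp}(x)$ and $C$ a set of zeros of $xA$, and conversely.

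What each route buys:
\begin{itemize}
\item Yours is determinant-free and self-contained, and it exhibits an explicit low-weight witness. This matches the $\mathrm{wt}(x)+\mathrm{wt}(xA)\ge m+1$ (branch-number) formulation the paper itself relies on in its involutory MDS theorems.
\item The textbook route gives a finer, one-to-one correspondence between individual $m$-sets of coordinates and individual minors of $A$. Its first step also applies to arbitrary, non-systematic generator matrices.
\end{itemize}

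Two cosmetic remarks. In the converse you implicitly use $xG\neq 0$; this holds because the first $m$ coordinates of $xG$ are $x\neq 0$. You also do not need the Singleton bound to get equality: $e_1G$ is already a nonzero codeword of weight at most $1+(n-m)$.
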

In the design of symmetric encryption schemes, invertible matrices are typically required. Consequently, we focus on square MDS matrices of order \( m \). These matrices constitute the redundant part of the generator matrix in the systematic form of MDS codes with length \( 2m \) and dimension \( m \), i.e., the parameter of the code is $[2m,m,m+1]$.

\begin{definition}
	A square matrix $A\in GL(m,\mathbb{F}_q) $  is said to be \textit{MDS} if every square submatrix of $A$ is non-singular.
\end{definition}
MDS matrices with efficiently implementable inverses are useful because of the use of inverse matrices in the decryption layer of an SPN-based block cipher \cite{daemen2002design}. Thus, MDS matrices with either involutory or orthogonal property are efficient in the context of implementation. Here, $A^{-1}$ denotes the inverse of $A$, and $I_m$ denotes the identity matrix of order $m$.

\begin{definition}
	A square matrix $A\in GL(m,\mathbb{F}_q)$  is said to be \textit{involutory} if $A^2 = I_m$.
\end{definition}

In 2021, Cheon et al.\ \cite{cheon2021semi} defined semi-involutory matrices as a generalization of the involutory matrices. The definition of a semi-involutory matrix is as follows:

\begin{definition}
	A non-singular matrix $M$ of order $m$ is said to be \textit{semi-involutory} if there exist non-singular diagonal matrices $D_1$ and $D_2$ such that
	\[
	M^{-1} = D_1 M D_2.
	\]
\end{definition}

Circulant and \( g \)-circulant matrices are special classes of matrices in which each row is obtained by applying a fixed shift to the preceding row. In the context of symmetric cryptography, particularly in the AES MixColumns operation \cite[Chapter-4]{paar2010understanding}, the employed MDS matrix is a circulant matrix. One of the key advantages of circulant matrices lies in their compact representation, that is, they contain at most \( n \) distinct entries, making it more efficient to store only \( n \) entries instead of \( n^2 \).

We now present the formal definitions of circulant and \( g \)-circulant matrices:

\begin{definition}
	The square matrix of the form
	\[
	\begin{bmatrix}
		c_0 & c_1 & c_2 & \cdots & c_{m-1} \\
		c_{m-1} & c_0 & c_1 & \cdots & c_{m-2} \\
		\vdots & \vdots & \vdots & \ddots & \vdots \\
		c_1 & c_2 & c_3 & \cdots & c_0
	\end{bmatrix}_{m \times m}
	\]
	is said to be a \textit{circulant matrix} (or, right circulant matrix) and is denoted by $\text{circ}(c_0, c_1, c_2, \ldots, c_{m-1})$.

On the other hand, the square matrix of the form
\[
\begin{bmatrix}
	c_0 & c_1 & c_2 & \cdots & c_{m-1} \\
	c_1 & c_2 & c_3 & \cdots & c_0 \\
	\vdots & \vdots & \vdots & \ddots & \vdots \\
	c_{m-1} & c_0 & c_1 & \cdots & c_{m-2}
\end{bmatrix}_{m\times m}
\]
is said to be a \textit{left-circulant matrix} and is denoted by $\text{l-circ}(c_0, c_1, c_2, \ldots, c_{m-1})$.
\end{definition}
In 1961, Friedman \cite{friedman1961eigenvalues} introduced a generalization of the circulant matrix, termed the \textit{$g$-circulant matrix}. In this matrix, each row (except the first) is derived from the previous row by cyclically shifting the elements by $g$ columns to the right. The formal definition is provided below.

\begin{definition}
	A \textit{$g$-circulant matrix} of order $m \times m$ is a matrix of the form

	\[
	\begin{bmatrix}
		c_0 & c_1 & \cdots & c_{m-1} \\
		c_{m-g} & c_{m-g+1} & \cdots & c_{m-1-g} \\
		c_{m-2g} & c_{m-2g+1} & \cdots & c_{m-1-2g} \\
		\vdots & \vdots & \ddots & \vdots \\
		c_g & c_{g+1} & \cdots & c_{g-1}
	\end{bmatrix}_{m\times m},
	\]
	with all subscripts taken modulo $m$.
\end{definition}

\begin{remark}
	Let \( A = (a_{i,j}) \) be a \( g \)-circulant matrix of order \( m \), with first row given by \( (c_0, c_1, \ldots, c_{m-1}) \). Then, the entries of \( A \) obey the recurrence relation
	\[
	a_{i,j} = a_{i+1,\,j+g},
	\]
	where the indices are taken modulo \( m \). Furthermore, each entry of \( A \) can be expressed explicitly as
	\[
	a_{i,j} = c_{j - ig \, (\mathrm{mod}\, m)}.
	\]
	
	When \( g = 1 \), the matrix reduces to a standard circulant matrix. If \( g \equiv -1 \pmod{m} \), the structure corresponds to that of a left-circulant matrix. Several important properties and applications of \( g \)-circulant matrices can be found in \cite{ablow1963roots, Davis1979Circulant}.
\end{remark}

Now, we discuss some important lemmas which help us to prove the results in the subsequent section.
\begin{lemma}\label{6Theorem4}\cite[Chapter 6, Theorem 6.18]{wan2011finite}
	Let $\mathbb{F}_q$ be a finite field $q$ elements, where $q$ is a prime power. If $q=2^n$, every element of $\mathbb{F}_{2^n}$ is a square element. Moreover, every element of $\mathbb{F}_{2^n}$ has a unique square root in $\mathbb{F}_{2^n}.$
\end{lemma}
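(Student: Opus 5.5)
The plan is to use the fact that the multiplicative group $\mathbb{F}_{2^n}^*$ is cyclic of odd order $2^n-1$. This lets us show that squaring is a bijection of $\mathbb{F}_{2^n}$ onto itself, which gives existence and uniqueness of square roots together.

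First, recall that $\mathbb{F}_q$ has characteristic $2$ when $q=2^n$. So the map $\phi:\mathbb{F}_{2^n}\to\mathbb{F}_{2^n}$, $\phi(a)=a^2$, is additive, since $(a+b)^2=a^2+2ab+b^2=a^2+b^2$. It is also multiplicative, so it is a ring homomorphism, namely the Frobenius map. Its kernel is $\{a : a^2=0\}=\{0\}$ because a field has no nonzero nilpotents. Hence $\phi$ is injective.

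Because $\mathbb{F}_{2^n}$ is finite, an injective self-map is surjective. Therefore every $b\in\mathbb{F}_{2^n}$ equals $a^2$ for some $a$, so every element is a square.

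Uniqueness follows from injectivity. If $a^2=c^2$, then $(a-c)^2=a^2-c^2=0$, so $a=c$. Alternatively, the square root can be written explicitly as $b^{2^{n-1}}$: using $b^{2^n}=b$ for all $b\in\mathbb{F}_{2^n}$, we get $(b^{2^{n-1}})^2=b^{2^n}=b$.

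There is no genuine obstacle here. The only point requiring care is justifying additivity of squaring in characteristic $2$, which is what makes injectivity follow from the trivial kernel. An injective multiplicative map alone would need the cyclic-group argument: $\gcd(2,2^n-1)=1$, so $x\mapsto x^2$ is an automorphism of $\mathbb{F}_{2^n}^*$.
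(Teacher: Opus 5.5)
Your proof is correct and complete. The paper gives no proof of this lemma; it only cites it from Wan's textbook, so there is no in-paper argument to compare against. Your route is the standard one: in characteristic $2$, squaring is the Frobenius endomorphism, it has trivial kernel, so it is injective and, by finiteness, bijective. The explicit root $b^{2^{n-1}}$, justified by $b^{2^n}=b$, is a nice independent confirmation of existence.

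Two small presentational points:

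\begin{itemize}
\item Your opening sentence announces a plan based on cyclicity of $\mathbb{F}_{2^n}^*$, but the body never uses it. Revise that sentence to match the Frobenius argument you actually give.
\item Your closing remark says the purely multiplicative route "would need the cyclic-group argument." Cyclicity is not needed there. If $x^2=y^2$ with $x,y\neq 0$, then $(xy^{-1})^2=1$. By Lagrange's theorem the order of $xy^{-1}$ divides $\gcd(2,2^n-1)=1$, so $x=y$.
\end{itemize}
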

\begin{lemma}\cite[Lemma 2.5 ]{dougherty2017algebraic}\label{6Lemma002}
Let $a_1, a_2, \ldots, a_s$ be ideals of a commutative ring $R$ which are relatively prime in pairs. Then,
	\(
	R / (a_1 \cdot a_2 \cdots a_s) \cong R / (a_1) \times R / (a_2) \times \cdots \times R / (a_s),\) where $(a_1\cdot a_2\cdots a_s)$ represents the ideal generated by $a_1\cdot a_2\cdots a_s$, and $(a_i)$ represents the ideal generated by $a_i.$ 
\end{lemma}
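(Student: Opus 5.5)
The statement to prove is Lemma~\ref{6Lemma002}: the generalized Chinese Remainder Theorem. I will first handle two pairwise comaximal ideals and then induct on $s$. Throughout, I read ``relatively prime in pairs'' as $(a_i)+(a_j)=R$ for $i\neq j$.

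Consider the ring homomorphism
\[
\phi: R \longrightarrow R/(a_1)\times \cdots \times R/(a_s), \qquad r\mapsto (r+(a_1),\ldots,r+(a_s)).
\]
Its kernel is $\bigcap_i (a_i)$. By the first isomorphism theorem it suffices to show two things:
\begin{enumerate}
\item $\bigcap_i (a_i)=(a_1a_2\cdots a_s)$;
\item $\phi$ is surjective.
\end{enumerate}

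For $s=2$, comaximality gives $u\in(a_1)$ and $v\in(a_2)$ with $u+v=1$.
\begin{itemize}
\item For the kernel, $(a_1a_2)\subseteq (a_1)\cap(a_2)$ is clear. Conversely, if $x\in(a_1)\cap(a_2)$, then $x=xu+xv$, and each summand lies in $(a_1)(a_2)=(a_1a_2)$, using commutativity for $xu$.
\item For surjectivity, given $(r_1,r_2)$, the element $r=r_1v+r_2u$ satisfies $r\equiv r_1 \pmod{(a_1)}$, since $v\equiv 1$ there, and symmetrically $r\equiv r_2\pmod{(a_2)}$.
\end{itemize}

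For the induction step I need one key fact: $(a_1)$ is comaximal with $(a_2\cdots a_s)$. To see this, write $1=u_j+v_j$ with $u_j\in(a_1)$ and $v_j\in(a_j)$ for each $j\ge 2$. Multiplying these out, $1=\prod_{j\ge2}(u_j+v_j)$. Every term of the expansion except $v_2\cdots v_s$ contains some $u_j$, so it lies in $(a_1)$. The remaining term $v_2\cdots v_s$ lies in $(a_2\cdots a_s)$.

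Applying the $s=2$ case to $a_1$ and $a_2\cdots a_s$ gives
\[
R/(a_1\cdots a_s)\cong R/(a_1)\times R/(a_2\cdots a_s).
\]
The induction hypothesis then splits the second factor as $R/(a_2)\times\cdots\times R/(a_s)$.

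I expect the main obstacle to be only this comaximality-of-products step. Everything else is the standard two-ideal argument.
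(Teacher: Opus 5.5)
Your proof is correct. It is the standard argument: first the two-ideal case (kernel via $x=xu+xv$, surjectivity via $r=r_1v+r_2u$), then induction using the comaximality of $(a_1)$ and $(a_2\cdots a_s)$, which you get by expanding $\prod_{j\ge 2}(u_j+v_j)$.

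The paper gives no proof of its own here; it only cites Dougherty's Lemma~2.5, so there is no competing route to compare against. Two small remarks:
\begin{itemize}
\item You set up the map $\phi$ and its two conditions for general $s$, but never verify them directly. That is harmless, because your induction builds the isomorphism by composition, and the composite is exactly the map induced by $\phi$.
\item Reading ``relatively prime'' as $(a_i)+(a_j)=R$ is essential, not cosmetic. With only a gcd-one condition the statement fails, for example for $2$ and $x$ in $\mathbb{Z}[x]$. In the paper's application, however, $R=\mathbb{F}_q[x]$ is a PID, where the two notions agree.
\end{itemize}
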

This leads us to the following well known \textit{Chinese Remainder Theorem(CRT)}:
\begin{lemma}\cite[Theorem 2.6 ]{dougherty2017algebraic}\label{6CRT}
	Let $R$ be a finite commutative ring with maximal ideals $\mathfrak{m}_1, \ldots, \mathfrak{m}_s$, where the index of stability of $\mathfrak{m}_i$ is $e_i$. Then, the map
	\[
	\Psi : R \to \prod_{i=1}^{s} R / \mathfrak{m}_i^{e_i}, \quad \text{defined by} \quad \Psi(x) = (x + \mathfrak{m}_1^{e_1}, \ldots, x + \mathfrak{m}_s^{e_s}),
	\]
	is a ring isomorphism, where $e_1,~e_2,\dots,e_s$ are the positive integers.
	
\end{lemma}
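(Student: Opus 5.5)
The statement is the Chinese Remainder Theorem in the form of Lemma~\ref{6CRT}. I will derive it from Lemma~\ref{6Lemma002}, so the work is to check its hypotheses for the ideals $\mathfrak{m}_i^{e_i}$ and to show that their product is the zero ideal.

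\textbf{Step 1: the maximal ideals and their powers are pairwise comaximal.} Since $R$ is finite, it is Artinian, so it has finitely many maximal ideals $\mathfrak{m}_1,\dots,\mathfrak{m}_s$. For $i\neq j$ the ideals $\mathfrak{m}_i$ and $\mathfrak{m}_j$ are distinct maximal ideals, so $\mathfrak{m}_i+\mathfrak{m}_j=R$. From $1=a+b$ with $a\in\mathfrak{m}_i$ and $b\in\mathfrak{m}_j$, expanding
\[
1=(a+b)^{e_i+e_j-1}
\]
shows that every term lies in $\mathfrak{m}_i^{e_i}$ or in $\mathfrak{m}_j^{e_j}$. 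Hence $\mathfrak{m}_i^{e_i}+\mathfrak{m}_j^{e_j}=R$.

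\textbf{Step 2: the product of the powers is zero.} Each descending chain $\mathfrak{m}_i\supseteq\mathfrak{m}_i^2\supseteq\cdots$ stabilizes because $R$ is finite, and the index of stability $e_i$ is the point where it does. The Jacobson radical $J=\bigcap_i\mathfrak{m}_i$ is nilpotent in an Artinian ring, so $J^N=0$ for some $N$. Pairwise comaximality gives
\[
\prod_i \mathfrak{m}_i=\bigcap_i \mathfrak{m}_i=J,
\]
and therefore $\prod_i\mathfrak{m}_i^N\subseteq J^N=0$. Taking $N\ge \max_i e_i$ and using the stabilization $\mathfrak{m}_i^N=\mathfrak{m}_i^{e_i}$, this yields $\prod_i \mathfrak{m}_i^{e_i}=0$.

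\textbf{Step 3: conclude.} Applying Lemma~\ref{6Lemma002} to the pairwise coprime ideals $\mathfrak{m}_i^{e_i}$ gives
\[
R\cong R/\Bigl(\prod_i\mathfrak{m}_i^{e_i}\Bigr)\cong\prod_i R/\mathfrak{m}_i^{e_i}.
\]
The isomorphism is exactly the natural map $\Psi$, since the map in Lemma~\ref{6Lemma002} is reduction modulo each factor. Concretely, $\Psi$ is a ring homomorphism whose kernel is $\bigcap_i\mathfrak{m}_i^{e_i}=\prod_i\mathfrak{m}_i^{e_i}=0$. Surjectivity follows from comaximality via the usual idempotent construction: for each $i$, choose $x_i\equiv 1 \pmod{\mathfrak{m}_i^{e_i}}$ and $x_i\equiv 0 \pmod{\mathfrak{m}_j^{e_j}}$ for $j\ne i$.

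The main obstacle is Step 2: the nilpotency of the Jacobson radical, and the identification of the stable power of $\mathfrak{m}_i$ with what is needed to kill the product. I would first attempt this by Nakayama's lemma applied to the finitely generated ideal $J^N$, using $J\cdot J^N=J^N$ at stabilization to conclude $J^N=0$.
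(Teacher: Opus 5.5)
Your proof is correct. You get pairwise comaximality of the $\mathfrak{m}_i^{e_i}$ from the binomial expansion, get $\prod_i\mathfrak{m}_i^{e_i}=0$ from nilpotency of the Jacobson radical (via Nakayama) together with stabilization of the powers, and then check injectivity and surjectivity of $\Psi$. The paper gives no proof of its own; it quotes the result from Dougherty (Theorem 2.6) and presents it as a consequence of Lemma~\ref{6Lemma002}, which is exactly the derivation you carry out.
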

\section{Construction of consta-\(g\)-circulant MDS matrices}
In this section, we present a characterization of consta-\(g\)-circulant matrices. By applying the Chinese Remainder Theorem, we derive a formula for counting the number of invertible matrices of this type. Focusing on involutory matrices, we determine the conditions under which such matrices are MDS.

\quad Let \( \lambda \in \mathbb{F}_q^{\ast} \), the multiplicative group of non-zero elements of \( \mathbb{F}_q \). A \(\lambda\)-constacyclic code \( C \) of length \( n \) over \( \mathbb{F}_q \) can be identified with an ideal in the quotient algebra \( \mathbb{F}_q[x]/(x^m - \lambda) \), where \( (x^m - \lambda) \) denotes the ideal generated by the polynomial \( x^m - \lambda \) in the polynomial ring \( \mathbb{F}_q[x] \). Consequently, the code \( C \) is generated by a factor of \( x^m - \lambda \), referred to as the \emph{generator polynomial} of the \(\lambda\)-constacyclic code.
The elements of \( \mathbb{F}_q[x]/(x^m - \lambda) \) are equivalence classes of polynomials modulo \( x^m - \lambda \). By the division algorithm, every element in \( \mathbb{F}_q[x]/(x^m - \lambda) \) has a unique representative of degree less than \( m \). Thus, the elements of \( \mathbb{F}_q[x]/(x^m - \lambda) \) can be described as
\[
\frac{\mathbb{F}_q[x]}{(x^m - \lambda)} = \left\{ a_0 + a_1 \bar{x} + a_2 \bar{x}^2 + \cdots + a_{m-1} \bar{x}^{m-1} \ \middle|\ a_i \in \mathbb{F}_q \right\},
\]
where \( \bar{x} = x + (x^m - \lambda) \) denotes the class of \( x \) in the quotient ring. This characterization shows that \( \mathbb{F}_q[x]/(x^m - \lambda) \) is an \( \mathbb{F}_q \)-vector space of dimension \( m \), with basis \( \{ 1, \bar{x}, \bar{x}^2, \ldots, \bar{x}^{m-1} \} \). We denote $\mathcal{U}(\mathbb{F}_q[x]/(x^m - \lambda) )$ as the set of all unit elements in the quotient ring $\mathbb{F}_q[x]/(x^m - \lambda)$

\begin{definition}
	The Hamming weight of a polynomial \(\bar{f}(x) = \sum_{i=0}^{m-1} a_i \bar{x}^i\) $\in 
	\frac{\mathbb{F}_q[x]}{(x^m - \lambda)}
	$ is defined as
	\[
	\mathrm{wt}(\bar{f}) := \#\{i \in \{0, 1, \ldots, m-1\} \mid a_i \ne 0\},
	\]
	i.e., the number of nonzero coefficients in the representative of \(f\).
\end{definition}
 Inspired by the framework of constacyclic codes as studied in \cite{Bakshi2012Aclass}, this section introduces the concept of consta-$g$-circulant matrices. Prior to developing the concept of consta-$g$-circulant matrix, we establish several foundational results.

\begin{lemma}\label{6Lemma1}
	Let $g$ be any positive integer. Then, for any $\lambda \in \mathbb{F}_q$, $x^m- \lambda$ $|$ $x^{mg}-\lambda$, if $1\equiv g\mod \textup{ord}(\lambda).$
\end{lemma}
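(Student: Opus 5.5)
The plan is to work in the quotient ring $\mathbb{F}_q[x]/(x^m-\lambda)$, where $\bar{x}^m=\lambda$. Divisibility $x^m-\lambda \mid x^{mg}-\lambda$ is equivalent to $\bar{x}^{mg}=\lambda$ in this ring. Note that $\lambda$ must be nonzero so that $\mathrm{ord}(\lambda)$ is defined. Set $r=\mathrm{ord}(\lambda)$; the hypothesis $g\equiv 1 \pmod r$ means $g=1+rk$ for some integer $k\ge 0$, since $g$ is positive.

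The computation is then immediate:
\[
\bar{x}^{mg}=(\bar{x}^m)^g=\lambda^g=\lambda^{1+rk}=\lambda\cdot(\lambda^r)^k=\lambda .
\]
Hence $x^{mg}-\lambda \equiv 0 \pmod{x^m-\lambda}$, which is the claim.

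Alternatively, and more explicitly, one can write
\[
x^{mg}-\lambda=(x^{mg}-\lambda^g)+(\lambda^g-\lambda).
\]
Put $y=x^m$. The first summand is $y^g-\lambda^g=(y-\lambda)(y^{g-1}+\lambda y^{g-2}+\cdots+\lambda^{g-1})$, so it is divisible by $x^m-\lambda$. The second summand is the constant $\lambda^g-\lambda=\lambda(\lambda^{rk}-1)=0$. This gives an explicit quotient polynomial, should one want it later.

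There is no real obstacle here. The only points needing care are to record that $\lambda\neq 0$ is implicitly required, and that $g\ge 1$ makes the factorisation of $y^g-\lambda^g$ valid. The case $g=1$ is trivially included.
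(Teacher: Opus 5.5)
Your proof is correct and takes essentially the paper's route: turn the hypothesis into $\lambda^g=\lambda$, then use that $x^m-\lambda$ divides $x^{mg}-\lambda^g=(x^m)^g-\lambda^g$. Your explicit factorisation version is exactly the paper's argument, and the quotient-ring computation is an equivalent rephrasing of it. Your remark that $\lambda\neq 0$ is implicitly needed, so that $\textup{ord}(\lambda)$ is defined, is a fair correction to the statement's ``for any $\lambda\in\mathbb{F}_q$''.
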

\begin{proof}
	Suppose $1\equiv g~\mod  \textup{ord}(\lambda)$, this implies $ \textup{ord}(\lambda)$ divides $g-1$. This implies that 
	\begin{equation}\label{6Equation1}
		\lambda^g=\lambda.
	\end{equation}
	Equation (\ref{6Equation1}) yields,
	$$x^{mg}-\lambda=x^{mg}-\lambda^g=(x^m-\lambda)s(x),$$
	for some $s(x)\in \mathbb{F}_q[x].$ Thus, $x^m- \lambda$ $|$ $x^{mg}-\lambda$.
\end{proof}
In light of Lemma~\ref{6Lemma1}, we present an analogue of \cite[Proposition 2]{cauchois2019circulant}.
\begin{proposition}\label{6Proposition1}
	Let $h(x)=x^m-\lambda+\sum_{i=0}^{m-1}h_ix^i\in \mathbb{F}_q[x],$ where $\lambda\in\mathbb{F}^*_q$ such that $g\equiv1\mod \textup{ord}(\lambda)$. Then, the map 
	\begin{eqnarray*}
		\phi:\frac{\mathbb{F}_q[x]}{(x^m-\lambda)} &\longrightarrow& \frac{\mathbb{F}_q[x]}{(x^m-\lambda)}\\		\phi(\bar{Q}(x))&:=&\bar{Q}(x^g)\bar{h}(x),\textup{ for all $\bar{Q}(x)$ in \( \mathbb{F}_q[x]/(x^m - \lambda) \),}
	\end{eqnarray*}
	is a vector space linear transformation.
\end{proposition}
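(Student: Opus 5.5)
The plan is to show first that $\phi$ is well defined, and then that it is linear. Only well-definedness uses a hypothesis of Proposition~\ref{6Proposition1}.

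\textbf{Well-definedness.} Suppose $Q_1(x)\equiv Q_2(x) \pmod{x^m-\lambda}$, so that $Q_1(x)-Q_2(x)=(x^m-\lambda)s(x)$ for some $s(x)\in\mathbb{F}_q[x]$. Substituting $x^g$ for $x$ gives
\[
Q_1(x^g)-Q_2(x^g)=(x^{mg}-\lambda)\,s(x^g).
\]
The hypothesis $g\equiv 1 \bmod \textup{ord}(\lambda)$ is exactly what Lemma~\ref{6Lemma1} needs, and it yields $x^m-\lambda \mid x^{mg}-\lambda$. Hence
\[
Q_1(x^g)\equiv Q_2(x^g) \pmod{x^m-\lambda}.
\]
The product with $\bar h(x)$ is the ordinary multiplication in the quotient ring, so it is well defined. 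Therefore $\phi(\bar Q(x))=\bar Q(x^g)\bar h(x)$ does not depend on the chosen representative.

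\textbf{Linearity.} Take $a,b\in\mathbb{F}_q$ and classes $\bar Q_1,\bar Q_2$. The substitution $Q(x)\mapsto Q(x^g)$ is an $\mathbb{F}_q$-algebra endomorphism of $\mathbb{F}_q[x]$, so
\[
(aQ_1+bQ_2)(x^g)=aQ_1(x^g)+bQ_2(x^g).
\]
Multiplying by $\bar h$ and using distributivity and commutativity in the quotient ring gives
\[
\phi(a\bar Q_1+b\bar Q_2)=a\,\phi(\bar Q_1)+b\,\phi(\bar Q_2).
\]

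The main obstacle is well-definedness, since it is the only place the hypothesis enters. Without $x^m-\lambda \mid x^{mg}-\lambda$, the substitution $x\mapsto x^g$ need not respect the ideal $(x^m-\lambda)$. Lemma~\ref{6Lemma1} removes this difficulty directly.
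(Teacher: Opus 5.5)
Your proof is correct and follows the same route as the paper: well-definedness comes from Lemma~\ref{6Lemma1}, followed by a direct check of linearity. You spell out the well-definedness step via the substitution $x\mapsto x^g$ and $x^m-\lambda \mid x^{mg}-\lambda$, which the paper only asserts, and you get linearity from the substitution being an $\mathbb{F}_q$-algebra endomorphism rather than from the paper's coefficient-by-coefficient computation.
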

\begin{proof}
	As a consequence of Lemma \ref{6Lemma1}, we see that \( \phi \) is well-defined. Next, we show that \( \phi \) is a linear transformation.
	 For this, let \( \bar{Q_1}(x)=a_0+a_1\bar{x}+\cdots+a_{m-1}\bar{x}^{m-1},~ \bar{Q}_2(x)= b_0+b_1\bar{x}+\cdots+b_{m-1}\bar{x}^{m-1} \in \frac{\mathbb{F}_q[x]}{(x^m - \lambda)} \). Then, we compute
	\begin{eqnarray*}
		\phi(\bar{Q}_1(x)+\bar{Q}_2(x))&=&\phi((a_0+b_0)+(a_1+b_1)\bar{x}+\cdots+(a_{m-1}+b_{m-1})\bar{x}^{m-1})\\
		&=&((a_0+b_0)+(a_1+b_1)\bar{x}^g+\cdots+(a_{m-1}+b_{m-1})\bar{x}^{g(m-1)})\bar{h}(x)\\
		&=&(a_0+a_1\bar{x}^g+\cdots+a_{m-1}\bar{x}^{g(m-1)})\bar{h}(x)+(b_0+b_1\bar{x}^g+\cdots+b_{m-1}\bar{x}^{g(m-1)})\bar{h}(x)\\
		&=&\phi(\bar{Q}_1(x))+\phi(\bar{Q}_2(x)).
	\end{eqnarray*}
	For $\alpha\in \mathbb{F}_q$, we have
	\begin{eqnarray*}
		\phi(\alpha \bar{Q}_1(x))&=&\phi(a_0\alpha+a_1\alpha \bar{x}+\cdots+\alpha a_{m-1}\bar{x}^{m-1})\\
		&=&(a_0\alpha+a_1\alpha \bar{x}^g+\cdots+\alpha a_{m-1}\bar{x}^{g(m-1)})\bar{h}(x)\\
		&=&\alpha(a_0\alpha+a_1\alpha \bar{x}^g+\cdots+\alpha a_{m-1}\bar{x}^{g(m-1)})\bar{h}(x)\\
		&=&\alpha\phi(\bar{Q}_1(x)).
	\end{eqnarray*} 
	Thus, $\phi$ is a vector space linear transformation.
	\end{proof}
\begin{remark}
	Matrix associated with linear transformation in Proposition \ref{6Proposition1} with respect to the  basis $B=\{1,\bar{x},\bar{x}^2,\dots,\bar{x}^{m-1}\}$, is called consta-$g$-circulant matrix. 	We use the notation $\mathcal{C}_g^{(\lambda)}(h_0, h_1, \ldots, h_{m-1})$ to denote a consta-$g$-circulant matrix associated with the polynomial 
	$
	h(x) = h_0 + h_1 x + h_2 x^2 + \cdots + h_{m-1} x^{m-1} +x^m-\lambda \in \mathbb{F}_q[x].
	$  
	 We give following examples:
	\begin{itemize}
		\item[(i)] If we take $m=5$, $g=3$ in Proposition \ref{6Proposition1}, we obtain  \textit{consta-3-circulant matrix:}
		$$\mathcal{C}_3^{(\lambda)}(h_0, h_1,h_2,h_3,h_4)=\begin{bmatrix}
			h_0&h_1&h_2&h_3&h_4\\
			\lambda h_2&\lambda h_3&\lambda h_4&h_0&h_1\\
			\lambda^2h_4&\lambda h_0&\lambda h_1&\lambda h_2&\lambda h_3\\
			\lambda^2h_1&	\lambda^2h_2&	\lambda^2h_3&	\lambda^2h_4&	\lambda h_0\\
				\lambda^3h_3&	\lambda^3h_4&	\lambda^2h_0&	\lambda^2h_1&	\lambda^2h_2
		\end{bmatrix}_{5\times 5}.$$ 
		\item [(ii)] If we take $\lambda=1$, then Proposition \ref{6Proposition1} yields \textit{$g$-circulant matrix}, i.e., 
		$$\mathcal{C}_g^{(1)}(h_0, h_1, \ldots, h_{m-1})=\begin{bmatrix}
			h_0&h_1&\cdots &h_{m-1}\\
			h_{m-g}&h_{m-g+1}&\cdots&h\\
			h_{m-2g}&h_{m-2g+1}&\cdots&h_{m-1-2g}\\
			\vdots&\vdots&\ddots&\vdots\\
			h_g&h_{g+1}&\cdots&h_{g-1}
					\end{bmatrix}_{m \times m}.$$ 
						\item [(iii)] If we take $\lambda=1$ and $g=1$, then Proposition \ref{6Proposition1} yields \textit{circulant matrix}, i.e., 
					$$\mathcal{C}_1^{(1)}(h_0, h_1, \ldots, h_{m-1})=\begin{bmatrix}
						h_0&h_1&\cdots &h_{m-1}\\
						h_{m-1}&h_{m-2}&\cdots&h_{m-2}\\
						h_{m-2}&h_{m-1}&\cdots&h_{m-3}\\
						\vdots&\vdots&\ddots&\vdots\\
						h_1&h_{2}&\cdots&h_{0}
					\end{bmatrix}_{m\times m}.$$ 
						\item [(iv)] If we take $\lambda=1$ and $g=m-1$, then Proposition \ref{6Proposition1} yields \textit{left circulant matrix}, i.e., 
					$$\mathcal{C}_{m-1}^{(1)}(h_0, h_1, \ldots, h_{m-1})=\begin{bmatrix}
						h_0&h_1&\cdots &h_{m-1}\\
						h_{1}&h_{2}&\cdots&h_{0}\\
						h_{2}&h_{3}&\cdots&h_{1}\\
						\vdots&\vdots&\ddots&\vdots\\
						h_{m-1}&h_{0}&\cdots&h_{m-2}
					\end{bmatrix}_{m \times m}.$$ 
	\end{itemize}
\end{remark}

\begin{proposition}\label{6counting1}
The upper bound of the total number of consta-$g$-circulant matrix associated with the Proposition \ref{6Proposition1} is $\left( \left\lfloor \frac{m - 1}{\operatorname{ord}(\lambda)} \right\rfloor + 1 \right) \cdot q^m$, where  $\lfloor \cdot \rfloor$ represent the greatest integer function.
\end{proposition}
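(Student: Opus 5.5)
The count factors into two independent choices: the polynomial $h$ and the integer $g$. A consta-$g$-circulant matrix $\mathcal{C}_g^{(\lambda)}(h_0,\dots,h_{m-1})$ is the matrix of the map $\phi$ of Proposition \ref{6Proposition1} in the basis $B=\{1,\bar{x},\dots,\bar{x}^{m-1}\}$. It is therefore completely determined by the pair $(g,\bar{h})$.

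First, the choice of $h$. Modulo $x^m-\lambda$ we have $\bar{h}(x)=h_0+h_1\bar{x}+\cdots+h_{m-1}\bar{x}^{m-1}$. Hence $\phi$ depends on $h$ only through the tuple $(h_0,\dots,h_{m-1})\in\mathbb{F}_q^m$, which gives at most $q^m$ choices.

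Second, the choice of $g$. Proposition \ref{6Proposition1} requires $g\equiv 1 \pmod{r}$, where $r=\operatorname{ord}(\lambda)$, so $g=1+rk$ with $k\ge 0$. The matrix depends only on the action of $\bar{Q}(x)\mapsto \bar{Q}(x^g)$ on the basis $B$, that is, on the classes $\bar{x}^{gj}$ for $0\le j\le m-1$. The point to establish is that it suffices to take $g\in\{1,\dots,m\}$. This is not simply ``$g$ modulo $m$'', because $\bar{x}^m=\lambda$ introduces scalar factors.

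To handle this, write $gj=mq_j+s_j$ with $0\le s_j<m$, so that $\bar{x}^{gj}=\lambda^{q_j}\bar{x}^{s_j}$. Compare this with $g'=g-m\,\mathrm{ord}(\lambda)$ whenever $g'\geq 1$. Replacing $g$ by $g'$ shifts each $q_j$ by $j\cdot\operatorname{ord}(\lambda)$, which leaves $\lambda^{q_j}$ unchanged, and keeps each $s_j$. So $g$ and $g'$ give the same map, since the congruence class of $g$ modulo $r$ is also preserved. Consequently only residues of $g$ modulo $\operatorname{lcm}$-type periods matter. Taking the representative $g\in\{1,\dots,m\}$ (possibly after also using the observation that distinct admissible $g$ above $m$ repeat matrices), the admissible values are $g=1+rk$ with $1+rk\le m$. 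That is, $0\le k\le \lfloor (m-1)/r\rfloor$, which gives $\lfloor (m-1)/r\rfloor+1$ values.

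Multiplying the two counts gives at most $\left(\lfloor \frac{m-1}{\operatorname{ord}(\lambda)}\rfloor+1\right)q^m$ matrices. This is only an upper bound, since different pairs $(g,h)$ could give the same matrix; for example, $h=0$ gives the zero matrix for every $g$.

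The main obstacle is justifying the reduction of $g$ to the range $1\le g\le m$. The exact period of $\phi$ in $g$ is $m\cdot\operatorname{ord}(\lambda)$ rather than $m$, so I would first try to prove that $g$ and $g+m$ give the same matrix up to the $\lambda$-scalars already recorded. If that fails, I would fall back on interpreting the statement as counting $g$ in the range $1\le g\le m$, as in the definition of $g$-circulant matrices, where indices are taken modulo $m$.
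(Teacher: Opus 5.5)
\textbf{The gap.} Your final count is exactly the paper's. The paper's proof takes $g\in\{1,\dots,m\}$ from the outset, counts $g=1+kr\le m$, and multiplies by the $q^m$ choices of $\bar h$; it never justifies that range of $g$. The problem is the step you add to justify it, and that step cannot be repaired.

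You are right that $\phi$ depends on $g$ only modulo $m\operatorname{ord}(\lambda)$, but this period is sharp. Since $\bar x^{am+b}=\lambda^a\bar x^b$, the element $\bar x$ has order exactly $mr$. For a unit $\bar h$, the equality $\phi_g=\phi_{g'}$ gives $\bar x^{g}\bar h=\bar x^{g'}\bar h$, hence $g\equiv g'\pmod{mr}$.

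The admissible residues modulo $mr$ are $1,1+r,\dots,1+(m-1)r$, which is $m$ of them. Only $\lfloor (m-1)/r\rfloor+1$ of these lie in $\{1,\dots,m\}$. So as soon as $r>1$ and $m\ge 2$, both of your reduction claims fail:
\begin{itemize}
\item Your parenthetical claim that admissible $g>m$ only repeat matrices from some $g\le m$ is false. Take $g=1+(m-1)r$ with $\bar h=1$.
\item Your plan to show that $g$ and $g+m$ give the same matrix also fails. When $r\mid m$, row $j$ of $\phi_{g+m}$ is $\lambda^j$ times row $j$ of $\phi_g$, so these are different matrices, not the same one up to recorded scalars. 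When $r\nmid m$, $g+m$ is not even admissible.
\end{itemize}

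\textbf{A counterexample to the literal statement.} Work over $\mathbb{F}_3$ with $m=2$, $\lambda=-1$, $r=2$. Taking rows $\phi(1),\phi(\bar x)$, as in the paper's $m=5$ example, the admissible values $g=1$ and $g=3$ give
$\begin{bmatrix} h_0&h_1\\ -h_1&h_0\end{bmatrix}$ and $\begin{bmatrix} h_0&h_1\\ h_1&-h_0\end{bmatrix}$.
These coincide only for $h_0=h_1=0$. That yields $17$ distinct consta-$g$-circulant matrices, against the claimed bound $(\lfloor 1/2\rfloor+1)\cdot 3^2=9$.

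\textbf{What this means for your proof.} If $g$ ranges over all positive integers with $g\equiv 1\pmod r$, the proposition is false, and the correct bound is $m\,q^m$. For $r=1$ your reduction is fine, since the period is then $m$. Your first plan therefore cannot succeed.

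The proposition holds only under your fallback convention that $g$ is restricted to $\{1,\dots,m\}$, which is exactly what the paper assumes without saying so. State that restriction explicitly as part of the setup instead of trying to derive it. With it, the proof is the two-factor count you and the paper give. Your remark that $h=0$ gives the zero matrix for every $g$ correctly explains why the result is only an upper bound.
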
\label{6Proposition002}
\begin{proof}
	Let us denote \( r = \operatorname{ord}(\lambda) \), where \( \lambda \in \mathbb{F}_q^* \), and let \( m \) be a positive integer. According to Proposition~\ref{6Proposition1}, for each integer \( g \in \{1, 2, \dots, m\} \), the map
	\[
	\phi: \frac{\mathbb{F}_q[x]}{(x^m - \lambda)} \to \frac{\mathbb{F}_q[x]}{(x^m - \lambda)}, \quad \phi(\bar{Q}(x)) := \bar{Q}(x^g)\bar{h}(x)
	\]
	is a vector space linear transformation, provided that
	\(
	g \equiv 1 \pmod{r}.
	\) We seek the number of integers \( g \in \{1, 2, \dots, m\} \) such that \( g \equiv 1 \mod r \). These are precisely the values in the arithmetic progression
	\[
	g = 1,\, 1 + r,\, 1 + 2r,\, \dotsc \leq m.
	\]
	Let \( N_g \) be the number of such values. This is the number of integers of the form \( g = 1 + kr \) for integers \( k \geq 0 \), such that
	\[
	1 + kr \leq m \quad \Rightarrow \quad k \leq \left\lfloor \frac{m - 1}{r} \right\rfloor.
	\]
	Thus,
	\[
	N_g = \left\lfloor \frac{m - 1}{r} \right\rfloor + 1.
	\]
	
\noindent 	Note that \( \bar{h}(x) \in \mathbb{F}_q[x]/(x^m - \lambda) \), which is a vector space of dimension \( m \) over \( \mathbb{F}_q \). Therefore, the number of possible choices for \( \bar{h}(x) \) is $q^m.$ Each admissible value of \( g \) yields \( q^m \) possible linear maps \( \phi \), corresponding to the choices of \( \bar{h}(x) \). Hence, the total number of such transformations is atmost
	$$ N_g \cdot q^m = \left( \left\lfloor \frac{m - 1}{r} \right\rfloor + 1 \right) \cdot q^m.
	$$
	Since each linear transformation \( \phi \) corresponds to a consta-$g$-circulant matrix, this is also the upper bound of  such matrices.
	
\end{proof}
\begin{proposition}
	Let $A=\mathcal{C}_3^{(\lambda)}(h_0, h_1,h_2,\dots, h_{m-1})$ and $B=\mathcal{C}_3^{(\lambda)}(h'_0, h'_1,h'_2,\dots, h'_{m-1}).$ Then, $AB$ is consta-$g_1\cdot g_2$-circulant matrix.
\end{proposition}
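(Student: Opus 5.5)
We read the statement with $A=\mathcal{C}_{g_1}^{(\lambda)}(h_0,\dots,h_{m-1})$ and $B=\mathcal{C}_{g_2}^{(\lambda)}(h'_0,\dots,h'_{m-1})$, where $g_1\equiv g_2\equiv 1 \pmod{\operatorname{ord}(\lambda)}$. The subscript $3$ in the statement appears to be a misprint for $g_1$ and $g_2$. The plan is to argue entirely through the linear transformations of Proposition~\ref{6Proposition1} rather than through the entries. Let $\phi_A(\bar Q)=\bar Q(x^{g_1})\bar h(x)$ and $\phi_B(\bar Q)=\bar Q(x^{g_2})\bar h'(x)$. From the worked example $\mathcal{C}_3^{(\lambda)}$, the $i$-th row of a consta-$g$-circulant matrix is the coordinate vector of the image of $\bar x^{i}$ in the basis $B=\{1,\bar x,\dots,\bar x^{m-1}\}$. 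Hence, for a row coordinate vector $u$ of $\bar Q$, the vector $uA$ is the coordinate vector of $\phi_A(\bar Q)$. It follows that $uAB$ is the coordinate vector of $\phi_B(\phi_A(\bar Q))$, so $AB$ is the matrix of the composite $\phi_B\circ\phi_A$.

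The key intermediate fact is that, for $g\equiv 1 \pmod{\operatorname{ord}(\lambda)}$, the substitution $\sigma_g:\bar Q(x)\mapsto \bar Q(x^{g})$ is a well-defined ring endomorphism of $\mathbb{F}_q[x]/(x^m-\lambda)$. It is well defined by Lemma~\ref{6Lemma1}, since $x^m-\lambda$ divides $x^{mg}-\lambda$, so the kernel of $x\mapsto x^g$ on $\mathbb{F}_q[x]$ modulo $(x^m-\lambda)$ contains $(x^m-\lambda)$. It is multiplicative because substitution is a ring homomorphism on $\mathbb{F}_q[x]$. Using this, I would compute
\[
\phi_B(\phi_A(\bar Q))=\sigma_{g_2}\bigl(\bar Q(x^{g_1})\bar h(x)\bigr)\bar h'(x)=\bar Q(x^{g_1g_2})\,\bar h(x^{g_2})\bar h'(x).
\]

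Now put $\bar H(x)=\bar h(x^{g_2})\bar h'(x)$ and take its unique representative $\sum_{i=0}^{m-1}H_ix^i$ of degree less than $m$. Define $h''(x)=x^m-\lambda+\sum_{i=0}^{m-1}H_ix^i$, so that $\bar h''=\bar H$ in the quotient. Since $g_1g_2\equiv 1\cdot 1 \pmod{\operatorname{ord}(\lambda)}$, Proposition~\ref{6Proposition1} applies with $g=g_1g_2$ and polynomial $h''$. Its matrix in the basis $B$ is $\mathcal{C}_{g_1g_2}^{(\lambda)}(H_0,\dots,H_{m-1})$, and by the computation above this equals $AB$.

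The main obstacle is not computational but conventional. One must fix precisely how the matrix is attached to $\phi$ (rows are images of basis vectors) so that the product $AB$, and not $BA$, corresponds to $\phi_B\circ\phi_A$. One must also justify that $\sigma_{g_2}$ respects products in the quotient ring. Lemma~\ref{6Lemma1} handles the second point. For the first, I would check it directly against the displayed $5\times 5$ example to confirm the row convention.
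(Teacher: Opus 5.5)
Your proposal is correct and follows the paper's approach: the paper's proof just says the result follows from Proposition~\ref{6Proposition1}, i.e., that $AB$ is the matrix of the composite of the two maps $\phi$. You supply the details it omits, namely the row convention read off from the $5\times 5$ example, the fact (via Lemma~\ref{6Lemma1}) that $\bar Q(x)\mapsto \bar Q(x^{g})$ is a ring endomorphism, and $g_1g_2\equiv 1 \pmod{\operatorname{ord}(\lambda)}$, and these give $AB=\mathcal{C}_{g_1g_2}^{(\lambda)}$ with generating polynomial $\bar h(x^{g_2})\bar h'(x)$, under your sensible reading of the subscript $3$ as a misprint for $g_1,g_2$.
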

\begin{proof}
	The proof follows directly from Proposition \ref{6Proposition1}.
	
\end{proof}
\begin{lemma}\cite[Lemma 3.16]{Chatterjee2024Anote}\label{6Lemma2}
	Let $S=\{kg \mod~m;~k=0,1,2,\dots,m-1\}.$ Then, S will be complete residue system modulo $m$ $\iff$ $\gcd(m,g)=1$.
\end{lemma}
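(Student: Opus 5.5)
The plan is to prove the two directions of Lemma~\ref{6Lemma2} separately, using only the division algorithm and elementary divisibility. Both directions reduce to one observation. The set $S$ is a complete residue system modulo $m$ exactly when the $m$ values $kg \bmod m$, for $k=0,1,\dots,m-1$, are pairwise distinct. This is because $S$ is then a set of $m$ distinct elements of $\{0,1,\dots,m-1\}$, a set that itself has exactly $m$ elements.

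For the direction ``$\Leftarrow$'', assume $\gcd(m,g)=1$ and suppose $k_1g\equiv k_2 g \pmod m$ for some $0\le k_1,k_2\le m-1$. Then $m \mid (k_1-k_2)g$. Since $\gcd(m,g)=1$, Euclid's lemma (equivalently, a Bézout identity $um+vg=1$, multiplied by $k_1-k_2$) gives $m\mid k_1-k_2$. Because $|k_1-k_2|<m$, this forces $k_1=k_2$. So the map $k\mapsto kg \bmod m$ is injective on $\{0,\dots,m-1\}$, and by the counting observation $S$ is a complete residue system.

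For the direction ``$\Rightarrow$'', I will argue by contrapositive. Let $d=\gcd(m,g)>1$ and set $k_0=m/d$, so that $1\le k_0\le m-1$. Then $k_0 g = m\cdot (g/d)\equiv 0 = 0\cdot g \pmod m$. Thus $k_0\neq 0$ and $k=0$ give the same residue, so $S$ has fewer than $m$ elements and cannot be a complete residue system.

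I do not expect a genuine obstacle. The only point needing care is the counting step: because the index set and the set of residues both have size $m$, injectivity of $k\mapsto kg \bmod m$ is equivalent to surjectivity, which is what makes ``pairwise distinct'' the same as ``complete residue system''. I will also note that $g$ is allowed to be larger than $m$, or a residue such as $1+r k$ from Proposition~\ref{6counting1}. This does not affect the argument, since everything depends only on $g \bmod m$ and on $\gcd(m,g)$.
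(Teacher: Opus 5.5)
Your proof is correct and complete. Reducing to injectivity of $k\mapsto kg \bmod m$ on $\{0,\dots,m-1\}$, applying Euclid's lemma when $\gcd(m,g)=1$, and using the collision of $k_0=m/\gcd(m,g)$ with $k=0$ for the converse is the standard argument. The paper gives no proof of its own (it quotes the lemma from Chatterjee and Laha), so there is nothing in the text to compare against, and your argument fills it in without gaps.
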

The above lemma ensures that exponentiation by \(g\) induces a permutation on the set of exponents modulo \(m\), when \(\gcd(m, g) = 1\). This property plays a key role in establishing the bijectivity of the map defined in the following proposition:

\begin{proposition}\label{6Proposition2}
	Let $\gcd(m,g)$=1 and $\gcd(h(x),~x^m-\lambda)=1$. Then, the map 
	\begin{eqnarray*}
	\phi:\frac{\mathbb{F}_q[x]}{(x^m-\lambda)} &\longrightarrow& \frac{\mathbb{F}_q[x]}{(x^m-\lambda)}\\		\phi(\bar{Q}(x))&:=&\bar{Q}(x^g)\bar{h}(x),\textup{ for all $\bar{Q}(x)$ in \( \mathbb{F}_q[x]/(x^m - \lambda) \),}
\end{eqnarray*}
	is bijective. 
\end{proposition}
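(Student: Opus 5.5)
The plan is to factor $\phi$ as a composition of two linear maps on $R:=\mathbb{F}_q[x]/(x^m-\lambda)$ and show each is bijective. Throughout I keep the standing hypothesis of Proposition~\ref{6Proposition1} that $g\equiv 1 \pmod{\operatorname{ord}(\lambda)}$. Without it, $\phi$ need not even be well defined, and it is exactly what Lemma~\ref{6Lemma1} supplies. Write $\phi=\mu_h\circ\sigma_g$, where
\[
\sigma_g(\bar Q(x))=\bar Q(x^g), \qquad \mu_h(\bar P(x))=\bar P(x)\,\bar h(x).
\]
By Lemma~\ref{6Lemma1}, $x^m-\lambda$ divides $x^{mg}-\lambda$. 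Hence $\sigma_g$ is a well-defined $\mathbb{F}_q$-algebra endomorphism of $R$, and $\mu_h$ is clearly $\mathbb{F}_q$-linear. Since $R$ is a finite-dimensional vector space of dimension $m$, it suffices to show that each factor is injective, or equivalently that each has an invertible matrix in the basis $B=\{1,\bar x,\dots,\bar x^{m-1}\}$.

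For $\mu_h$: since $\gcd(h(x),x^m-\lambda)=1$, Bézout gives $u(x),v(x)\in\mathbb{F}_q[x]$ with $u(x)h(x)+v(x)(x^m-\lambda)=1$. So $\bar h$ is a unit in $R$ with inverse $\bar u$. Then $\mu_{u}$ is a two-sided inverse of $\mu_h$, and $\mu_h$ is bijective.

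For $\sigma_g$: compute the image of each basis vector. For $0\le i\le m-1$, write $ig=s_i m+r_i$ with $0\le r_i<m$. Using $\bar x^m=\lambda$ in $R$, we get $\sigma_g(\bar x^i)=\bar x^{ig}=\lambda^{s_i}\bar x^{r_i}$. Since $\gcd(m,g)=1$, Lemma~\ref{6Lemma2} says $\{r_0,\dots,r_{m-1}\}$ is a complete residue system modulo $m$, so $i\mapsto r_i$ is a permutation of $\{0,\dots,m-1\}$. Hence the matrix of $\sigma_g$ with respect to $B$ is a monomial matrix: a permutation matrix whose nonzero entries are replaced by the powers $\lambda^{s_i}$. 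These entries are nonzero because $\lambda\in\mathbb{F}_q^*$. Its determinant is $\pm\prod_i\lambda^{s_i}\neq 0$, so $\sigma_g$ is bijective.

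Composing the two bijections gives that $\phi$ is bijective. Equivalently, the consta-$g$-circulant matrix $\mathcal{C}_g^{(\lambda)}(h_0,\dots,h_{m-1})$ is the product of an invertible monomial matrix and the invertible matrix of multiplication by $\bar h$. The only delicate step is the basis computation for $\sigma_g$. There one must track the powers of $\lambda$ picked up by the reductions $\bar x^{ig}=\lambda^{s_i}\bar x^{r_i}$, and observe that they never vanish, so the permutation supplied by Lemma~\ref{6Lemma2} really does yield an invertible matrix. The step also depends on well-definedness of $\sigma_g$, which is why the congruence $g\equiv1\pmod{\operatorname{ord}(\lambda)}$ must be carried along implicitly.
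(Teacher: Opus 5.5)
Your proof is correct and is essentially the paper's argument. The paper shows $\ker\phi=0$ by first cancelling $\bar h$ using $\gcd(h(x),x^m-\lambda)=1$, then applying Lemma~\ref{6Lemma2} to conclude that $\bar Q(x^g)=\bar 0$ forces $Q=0$, and finally invoking finiteness; this is exactly your injectivity of $\mu_h$ and $\sigma_g$, written as a single kernel computation. Your explicit tracking of the nonzero factors $\lambda^{s_i}$ in $\bar x^{ig}=\lambda^{s_i}\bar x^{r_i}$ is slightly more careful than the paper, which writes the reduced polynomial without these scalars.
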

\begin{proof}
	Let $\bar{Q}(x)=q_0+q_1\bar{x}+\cdots+q_{m-1}\bar{x}^{m-1}\in Ker~\phi$. Then, 
	\begin{eqnarray*}
		\phi(\bar{Q}(x))&=&\bar{0}.\\
	\end{eqnarray*}
	This implies that
	\begin{eqnarray*}	
			\bar{Q}(x^g)\bar{h}(x)=\bar{0}~
	\end{eqnarray*}
	This leads to $(x^m-\lambda)|(Q(x^g)h(x)).$ Since $\gcd(h(x),~x^m-\lambda)=1,$ so $x^m-\lambda |Q(x^g).$ It follows that $Q(x^g)=0 \mod(x^m-\lambda)$. Since $\gcd(m,g)=1,$ so by invoking of Lemma \ref{6Lemma2}, we obtain
	$q_0+q_1x^{i_1}+\cdots+q_{m-1}x^{i_{m-1}}=0 \mod (x^m-\lambda), \textup{ where}~\{0,i_1,i_2,\dots,i_{m-1}\}=\{0,1,2,\dots,m-1\}.$
Consequently, we have \( q_i = 0 \) for all \( i = 0,1,2,\dots,m-1 \), which implies that \( Q(x) = 0 \). Thus, \( \phi \) is injective. Moreover, since \( \frac{\mathbb{F}_q[x]}{(x^m-\lambda)} \) is finite, it follows that \( \phi \) is also surjective. Hence, \( \phi \) is bijective.

\end{proof}
In \cite[Theorem 4.1]{Chatterjee2024Anote}, Chatterjee and Laha proved that for any \( g \)-circulant matrix \( A \) of order \( m \), if \( \gcd(m, g) > 1 \), then \( A \) cannot be MDS. Advancing this result further, and specializing to the case \( \lambda = 1 \), we consider \( g \)-circulant matrices and establish the following theorem:

\begin{theorem}
			Let $A=\mathcal{C}_g^{(1)}(h_0, h_1,h_2,\dots, h_{m-1})$ be a matrix of order $m$. If gcd$(m,g)>1$ or gcd($h(x),x^m-1$) non-unit. Then, $A$ can not be an MDS matrix.  
\end{theorem}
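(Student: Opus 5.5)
Two cases. If $\gcd(m,g)>1$, the claim is exactly \cite[Theorem 4.1]{Chatterjee2024Anote}, specialised to $\lambda=1$ (where $\mathcal{C}_g^{(1)}$ is an ordinary $g$-circulant matrix). For completeness, I would recall why it holds. Put $d=\gcd(m,g)>1$. Since $a_{i,j}=h_{j-ig \bmod m}$ and $ig \bmod m$ runs only over multiples of $d$, rows $i$ and $i+m/d$ are identical. Hence $A$ is singular, so it is not MDS.

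So assume $\gcd(m,g)=1$ and $\gcd(h(x),x^m-1)\neq 1$. Then there is an irreducible $f\mid \gcd(h,x^m-1)$, and I will show $A$ is singular.

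The key observation is that $A$ is the matrix of $\phi(\bar Q)=\bar Q(x^g)\bar h(x)$ with respect to the basis $B$. Since $f\mid h$ and $f\mid x^m-1$, the reduction $\bar h$ is a zero divisor in $R=\mathbb{F}_q[x]/(x^m-1)$. Take $u=(x^m-1)/f$; then $\bar u\ne 0$ and $\bar u\bar h=0$.

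It remains to show $\phi$ has nontrivial kernel. Because $\gcd(g,m)=1$, the substitution $\sigma:\bar Q(x)\mapsto \bar Q(x^g)$ is well defined on $R$ (Lemma \ref{6Lemma1} with $\lambda=1$). By Lemma \ref{6Lemma2} it permutes the basis $\{\bar x^i\}$, so it is a ring automorphism of $R$. Choose $Q=\sigma^{-1}(u)$, that is, $Q(x)=u(x^{g'})$ with $gg'\equiv1 \pmod m$. Then $\phi(\bar Q)=\bar u\,\bar h=\bar 0$ while $\bar Q\neq 0$. So $\phi$ is not injective, $\det A=0$, and $A\notin GL(m,\mathbb{F}_q)$, so $A$ cannot be MDS.

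The main point needing care is the direction of the composition. Writing $\phi=\mu_{\bar h}\circ\sigma$, with $\mu_{\bar h}$ multiplication by $\bar h$, we get $\det A=\det(\mu_{\bar h})\det(\sigma)$. Here $\det\sigma=\pm1$ since $\sigma$ is a permutation matrix, and $\det\mu_{\bar h}=\operatorname{Res}(x^m-1,h)$ up to sign, which vanishes because of the common factor $f$. This determinant factorisation is an alternative way to finish and sidesteps any ambiguity about whether rows or columns encode $\phi$. I would also remark that the singularity argument gives the stronger statement that $A$ is not even invertible, so the MDS failure comes before checking any proper submatrix.
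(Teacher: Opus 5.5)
Your proof is correct, and it takes a different and more complete route than the paper. The paper's proof is only a pointer to Proposition~\ref{6Proposition2}, with the case $\gcd(m,g)>1$ credited beforehand to Chatterjee and Laha. But Proposition~\ref{6Proposition2} shows only that $\gcd(m,g)=1$ and $\gcd(h,x^m-\lambda)=1$ are \emph{sufficient} for $\phi$ to be bijective. The theorem needs the converse: if either condition fails, then $A$ is singular.

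You prove exactly this converse. For $d=\gcd(m,g)>1$, rows $i$ and $i+m/d$ coincide because $(m/d)g\equiv 0 \pmod{m}$; this, rather than the remark that $ig \bmod m$ is a multiple of $d$, is the cleaner reason. For $\gcd(m,g)=1$, the factorisation $\phi=\mu_{\bar h}\circ\sigma$, with $\sigma$ a basis-permuting ring automorphism, produces the explicit nonzero kernel element $u(x^{g'})$ with $u=(x^m-1)/f$. Equivalently, it gives $\det A=\pm\operatorname{Res}(x^m-1,h)$.

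Combined with Proposition~\ref{6Proposition2}, your argument yields a genuine characterisation: a $g$-circulant matrix is invertible if and only if $\gcd(m,g)=1$ and $\gcd(h,x^m-1)=1$.

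Your argument also extends with essentially no change to consta-$g$-circulant matrices with $g\equiv 1 \pmod{\operatorname{ord}(\lambda)}$. There, row $i+m/d$ equals $\lambda^{g/d}$ times row $i$, and $\sigma$ is represented by a monomial matrix rather than a permutation matrix. So the same conclusion holds with $x^m-\lambda$ in place of $x^m-1$.
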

\begin{proof}
		The proof is supported by the evidence presented in the Proposition \ref{6Proposition2}.
\end{proof}

The following theorem provides a formula for counting invertible consta-\( g \)-circulant matrices associated with the map defined in Proposition~\ref{6Proposition2}. This result is particularly valuable in cryptography, as it helps to significantly reduce the search space when constructing consta-$g$-circulant MDS matrices:

\begin{theorem}
	Let \( x^m - \lambda = \prod_{i=1}^{t} (f_i(x))^{e_i} \) be the factorization over \( \mathbb{F}_q \), where each \( f_i(x) \) is a monic irreducible polynomial and \( e_i \ge 1 \). Then, the total number of invertible consta-\( g \)-circulant matrices associated with the map in Proposition~\ref{6Proposition2} is
	\[
	N\cdot \prod_{i=1}^{t} \left( q^{\deg f_i} - 1 \right),
	\]
	where \( r = \operatorname{ord}(\lambda) \) and $N=\#\{k\in \mathbb{Z}: 0\leq k< \left\lfloor \frac{m - 1}{r} \right\rfloor +1,~\gcd(1+rk,m)=1\}$.
\end{theorem}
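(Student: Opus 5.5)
The plan is to identify the invertible consta-$g$-circulant matrices with pairs $(g,\bar h)$ for which the map $\phi$ of Proposition~\ref{6Proposition2} is bijective. I would then count the admissible values of $g$ and the admissible $\bar h$ separately and multiply. The matrix of $\phi$ in the basis $B=\{1,\bar x,\dots,\bar x^{m-1}\}$ is invertible exactly when $\phi$ is bijective. By Proposition~\ref{6Proposition2}, this holds as soon as $\gcd(m,g)=1$ and $\bar h$ is a unit of $R=\mathbb{F}_q[x]/(x^m-\lambda)$. So the count splits as (number of admissible $g$) $\times$ $\#\,\mathcal{U}(R)$.

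\emph{Step 1 (counting $g$).} By Proposition~\ref{6Proposition1} and the proof of Proposition~\ref{6counting1}, the admissible $g$ in $\{1,\dots,m\}$ are $g=1+rk$ with $0\le k\le \lfloor (m-1)/r\rfloor$. Imposing $\gcd(g,m)=1$, which Lemma~\ref{6Lemma2} and Proposition~\ref{6Proposition2} require, leaves exactly $N$ values.

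I would also check that distinct pairs give distinct matrices. The first row of the matrix is $\phi(1)=\bar h$, so it determines $\bar h$. The second row is $\phi(\bar x)=\bar x^{g}\bar h$. Since $\bar h$ is a unit, $\bar x^g\bar h=\bar x^{g'}\bar h$ forces $\bar x^{g}=\bar x^{g'}$. For $1\le g,g'\le m$ these are distinct basis-type elements up to the scalar $\lambda$, so $g=g'$.

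\emph{Step 2 (counting units).} By Lemma~\ref{6Lemma002} and the Chinese Remainder Theorem (Lemma~\ref{6CRT}), $R\cong\prod_{i=1}^t \mathbb{F}_q[x]/(f_i^{e_i})$, because the $f_i^{e_i}$ are pairwise coprime. An element is a unit iff each component is a unit. Each factor is a local ring with maximal ideal $(f_i)/(f_i^{e_i})$, so its units are the elements not divisible by $f_i$. This gives $q^{e_i\deg f_i}-q^{(e_i-1)\deg f_i}=q^{(e_i-1)\deg f_i}(q^{\deg f_i}-1)$ units in the $i$-th factor. Multiplying over $i$ yields $\#\,\mathcal{U}(R)$, and combining with Step 1 gives the count.

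\emph{Main obstacle.} The delicate point is the repeated factors. The computation above gives the stated product $\prod_i(q^{\deg f_i}-1)$ exactly when all $e_i=1$. This is automatic when $p\nmid m$, since then $x^m-\lambda$ is separable because its derivative $mx^{m-1}$ is coprime to it. When $p\mid m$, the extra factor $\prod_i q^{(e_i-1)\deg f_i}$ appears. I would therefore carry out the proof under the separability hypothesis $\gcd(m,p)=1$, or equivalently $e_i=1$ for all $i$, and remark that in general the factor $q^{(e_i-1)\deg f_i}$ must be included.

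A second subtlety is that Proposition~\ref{6Proposition2} is only a sufficient condition for invertibility. To claim an exact count one should also argue the converse: if $\gcd(g,m)>1$ or $\bar h$ is a nonunit, then $\phi$ has nontrivial kernel. For a nonunit $\bar h$, take $\bar Q$ with $\bar Q(x^g)$ a nonzero multiple of the cofactor of $\gcd(h,x^m-\lambda)$. For $\gcd(g,m)=d>1$, the substitution $x\mapsto x^g$ is non-injective on the exponent set. I expect to treat the count as over the class defined in Proposition~\ref{6Proposition2}, where this issue does not arise.
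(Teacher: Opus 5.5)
Your route is the paper's route: the CRT decomposition $\mathbb{F}_q[x]/(x^m-\lambda)\cong\prod_i\mathbb{F}_q[x]/(f_i^{e_i})$, units counted componentwise, then multiplication by the number $N$ of shifts $g=1+rk\le m$ coprime to $m$.

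Your caveat about repeated factors is correct, and it identifies a genuine error in the paper's proof. The paper asserts that the unit group of the local ring $\mathbb{F}_q[x]/(f_i^{e_i})$ ``is determined by'' the residue field, and counts it as $q^{\deg f_i}-1$. The true size is $q^{(e_i-1)\deg f_i}(q^{\deg f_i}-1)$, as you computed. So the statement as written is false whenever $p\mid m$, equivalently whenever some $e_i>1$. The smallest instance is $q=m=2$, $\lambda=1$. Here $x^2-1=(x+1)^2$ and $N=1$, so the formula predicts a single invertible matrix. Yet $\bar h=1$ and $\bar h=\bar x$ give the two invertible matrices $I_2$ and $\begin{bmatrix}0&1\\1&0\end{bmatrix}$. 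Rows 1 and 4 of the paper's table, where $x^2-1=(x+1)^2$ and $x^4-1=(x+1)^4$ over $\mathbb{F}_8$, fall in this regime; for example, row 1 should give $56$, not $49$. The correct general count is $N\cdot\prod_i q^{(e_i-1)\deg f_i}(q^{\deg f_i}-1)$. This collapses to the stated formula exactly when $p\nmid m$, so imposing $\gcd(m,p)=1$, as you propose, is the right repair.

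Your two extra checks do not appear in the paper, and an exact count needs them. Your first-row/second-row argument that distinct pairs $(g,\bar h)$ give distinct matrices is fine. For the converse of Proposition~\ref{6Proposition2}, there is a cleaner argument than the cofactor construction. Write $\phi=M_{\bar h}\circ S_g$, where $S_g(\bar Q)=\bar Q(x^g)$ sends $\bar x^i$ to $\lambda^{\lfloor gi/m\rfloor}\bar x^{gi \bmod m}$. Then $S_g$ is bijective iff $i\mapsto gi \bmod m$ permutes $\{0,\dots,m-1\}$, i.e.\ iff $\gcd(g,m)=1$. Multiplication $M_{\bar h}$ is bijective iff $\bar h$ is a unit. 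On a finite space, $\phi$ is bijective iff both factors are: if $\phi$ is bijective, then $M_{\bar h}$ is surjective, hence bijective, and $S_g=M_{\bar h}^{-1}\circ\phi$.
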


\begin{proof}
	The polynomial \( x^m - \lambda \) factorizes as 
$
x^m - \lambda = \prod_{i=1}^{t} (f_i(x))^{e_i},
$ where each \( f_i(x) \) is a monic irreducible polynomial and the factors are pairwise coprime. By the Lemma \ref{6CRT}, this factorization induces a ring isomorphism
\[
\frac{\mathbb{F}_q[x]}{( x^m - \lambda )} \cong \prod_{i=1}^{t} \frac{\mathbb{F}_q[x]}{( (f_i(x))^{e_i} )}.
\]
	An element in this ring is invertible if and only if its image in each component ring \( \frac{\mathbb{F}_q[x]}{( (f_i(x))^{e_i} )}  \) is invertible. Each of these component rings is a finite local ring, and its group of units is determined by the units in the corresponding residue field \( \frac{\mathbb{F}_q[x]}{( f_i(x) )} \). Therefore, the number of invertible elements in \( \frac{\mathbb{F}_q[x]}{( x^m - \lambda )} \) is
	\[
	\prod_{i=1}^{t} \left( q^{\deg f_i} - 1 \right).
	\]
	Let $N=\#\{k\in \mathbb{Z}: 0\leq k< \left\lfloor \frac{m - 1}{r} \right\rfloor +1,~\gcd(1+rk,m)=1\}.$
 Thus, multiplying the number of invertible polynomials (corresponding to invertible matrices) by the number of such distinct shift classes gives the total number of invertible consta-\( g \)-circulant matrices
	\[
	N \cdot \prod_{i=1}^{t} \left( q^{\deg f_i} - 1 \right).
	\]
	This completes the proof.
\end{proof}
\begin{example}
Consider $\mathbb{F}_{8},~\mathbb{F}_{9},~\mathbb{F}_{16},$ and $\mathbb{F}_{25}$ are the fields. Let $\omega$ be the  root of the polynomial $1+x+x^2$ in $\mathbb{F}_{16}$, $\alpha$ be the root of the polynomial $x^2+4x+1$ in $\mathbb{F}_{25}$ and $\beta$ be the root of the polynomial $x^2+1$ in $\mathbb{F}_9$. 
The following table presents counts of invertible consta-$g$-circulant matrices over finite fields $\mathbb{F}_q$ for various parameters. Each case includes the factorization of $x^m - \lambda$ and the corresponding number of invertible matrices:

\begin{table}[ht]
	\centering
	\caption{\textbf{\textit{ Invertible consta-$g$-circulant matrices for various parameters}}}
	\label{6tab:consta_g_circulant_10}
	\vspace{.3cm}
	
	\tiny{\begin{tabular}{|c|c|c|c|c|l|c|c|c|}
		\hline
		S. No.&\(m\) & \(q\) & \(\lambda\) & Order \(r\) & Factorization of \(x^m-\lambda\) over \(\mathbb{F}_q\) & \(N\) & \(\prod_i (q^{\deg f_i} -1)\) & Total invertible matrices\\&~&~&~&~&($x^m-\lambda=\prod_{i=0}^{t}f_{i
		
		}$)&~&~&($N\cdot\prod_{i=0}^{t}(q^{\deg f_i-1})$) \\ \hline
		
		1&2 & 8  & 1 & 1 & \(x^2 - 1 = (x-1)(x+1)\)                         & 1 & \((8^1-1)^2 = 7^2=49\)               & \(1 \times 49 = 49\) \\ \hline
		
		2&3 & 8  & 1 & 1 & \(x^3 - 1 = (x-1)(x^2 + x +1)\)                  & 2 & \((8^1-1)(8^2-1) = 7 \times 63 = 441\) & \(2 \times 441 = 882\) \\ \hline
		
		3&2 & 9  & 2 & 2 & \(x^2 - 2 = (x^2-2)\)                  & 1 & \((9^2-1)   = 80\) & \(1 \times 80 = 80\) \\ \hline
		
		4&4 & 8  & 1 & 1 & \(x^4 - 1 = (x-1)^4\) & 2 & \((8^1-1)^4 = 2401 \) & \(2 \times 2401 = 4802\) \\ \hline
		
		5&4 & 9 & 1 & 1 & \(x^4 - 1 = (x-1)(x+1)(x-\beta)(x-\beta^3)\)                         & 2 & $(9-1)(9-1)(9-1)(9-1)=4096$          & \(2 \times 4096 = 8192\) \\ \hline
		
		6&3 & 16 & 1 & 1 & \(x^3 - 1 = (x-1)(x - \omega)(x-\omega^2))\)                  & 2 & \((16^1-1)^3 = 3375\) & \(2 \times 3375 = 6750\) \\ \hline

		7&3 & 25 & 1 & 1 & \(x^3 - 1 = (x-1)(x-\alpha)(x-\alpha^2)\)                  & 2 & $(25-1)^3$=13824 & \(2 \times 13824 = 27648\) \\ \hline
		
		8&4 & 25 & 3 & 4 & \(x^4 - 1 = (x-1)(x+1)(x+2)(x+3)\)                  & 1 & \((25^1-1)^4=331776\) & \(1 \times 331776 = 331776\) \\ \hline

	\end{tabular}}
\end{table}

\end{example}
We now present a necessary and sufficient condition for a  matrix to be both MDS and involutory. The theorem below characterizes this condition in terms of the weight of a polynomials.

\begin{theorem}
	Let $g^2=m \cdot \textup{ord}(\lambda)+1$, and $h(x)=(x^m-\lambda)+\sum_{i=0}^{m-1}h_ix^i\in \mathbb{F}_q[x]$. Then, $\mathcal{C}_g^{(\lambda)}(h_0, h_1, \ldots, h_{m-1})$ is an involutory MDS matrix if and only if:
	\begin{enumerate}
		\item [(i)] for all $Q_1(x)\in \frac{\mathbb{F}_q[x]}{(x^m-\lambda)}$, we have 
		$$\textup{wt}(Q_1(x))+\textup{wt}(Q_1(x^g)h(x) \mod (x^m-\lambda))\geq m+1.$$
		\item [(ii)] $h(x)h(x^g)\equiv 1 \mod (x^m-\lambda).$
	\end{enumerate} 
\end{theorem}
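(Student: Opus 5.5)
The plan is to split the statement into its two halves, the MDS condition and the involutory condition, and to prove each as an equivalence separately. Throughout, write $A=\mathcal{C}_g^{(\lambda)}(h_0,\dots,h_{m-1})$. By definition, $A$ is the matrix of $\phi(\bar Q)=\bar Q(x^g)\bar h(x)$ in the basis $B=\{1,\bar x,\dots,\bar x^{m-1}\}$. Identify the coefficient vector $v\in\mathbb{F}_q^m$ of $Q_1$ with $Q_1$ itself. Then the row vector $vA$ is the coefficient vector of $\phi(Q_1)$, with the convention that row $i$ of $A$ is the image of $\bar x^i$; I will check this against example (i). Consequently the codewords of the code with generator matrix $[I\mid A]$ are exactly the vectors $(Q_1\mid \phi(Q_1))$, and the Hamming weight of such a codeword is $\textup{wt}(Q_1)+\textup{wt}(Q_1(x^g)h(x)\bmod (x^m-\lambda))$.

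For the MDS part I will use the lemma from \cite{macwilliams1977theory}. $A$ is MDS if and only if $[I\mid A]$ generates a $[2m,m,m+1]$ code, that is, if and only if every nonzero codeword has weight at least $m+1$. By the weight formula above, this is precisely condition (i), read for nonzero $Q_1$; for $Q_1=0$ the inequality is vacuous, or trivially excluded. The MDS definition in the paper also requires $A\in GL(m,\mathbb{F}_q)$. This is automatic under (i), since (i) forces $\phi(Q_1)\neq 0$ for all $Q_1\neq 0$. No invertibility hypothesis therefore needs to be added separately.

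For the involutory part I will compute $\phi^2$:
\[
\phi(\phi(\bar Q))=\phi\big(\bar Q(x^g)\bar h(x)\big)=\bar Q(x^{g^2})\,\bar h(x^g)\,\bar h(x).
\]
Substituting $x^g$ into the product is legitimate because $g\equiv 1 \bmod \textup{ord}(\lambda)$, which makes substitution well defined modulo $x^m-\lambda$ by Lemma~\ref{6Lemma1}. Next, the hypothesis $g^2=m\cdot\textup{ord}(\lambda)+1$ gives $x^{g^2}=x\cdot (x^m)^{\textup{ord}(\lambda)}\equiv x\lambda^{\textup{ord}(\lambda)}=x \pmod{x^m-\lambda}$. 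Hence $\bar Q(x^{g^2})=\bar Q(x)$, and so $\phi^2$ is multiplication by $\bar h(x)\bar h(x^g)$. Since the matrix of $\phi^2$ is $A^2$, we get $A^2=I_m$ if and only if $\phi^2=\mathrm{id}$. Evaluating at $\bar Q=1$ shows this holds if and only if $h(x)h(x^g)\equiv 1 \pmod{x^m-\lambda}$, which is condition (ii). Combining the two equivalences proves the theorem.

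The main obstacle is bookkeeping rather than mathematics. I must confirm that the row/column convention for $A$ makes codewords of $[I\mid A]$ correspond to $(Q_1,\phi(Q_1))$ rather than to the transpose; if it is the transpose, I will instead use the fact that $A$ is MDS if and only if $A^{T}$ is. I must also make sure that the reduction $x^{g^2}\equiv x$ is applied only to representatives modulo $x^m-\lambda$.
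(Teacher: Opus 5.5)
Your proposal is correct and follows essentially the same route as the paper. The MDS half comes from the weight criterion for the code generated by $[I\mid A]$, and the involutory half from computing $\phi\circ\phi$ as multiplication by $\bar h(x)\bar h(x^g)$, using $x^{g^2}=x\,(x^m)^{\textup{ord}(\lambda)}\equiv x$. You also supply the converse, that (ii) implies $A^2=I_m$, which the paper's proof never actually argues.

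One wording fix: for $Q_1=0$ the inequality in (i) is false rather than vacuous, so (i) must be read over nonzero $Q_1$, which is exactly how you then use it.
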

\begin{proof}
	Let $\mathcal{C}_g^{(\lambda)}(h_0, h_1, \ldots, h_{m-1})$ is involutory MDS matrix. Then,  
	\begin{eqnarray*}
		&\Leftrightarrow& (I_m|\mathcal{C}_g^{(\lambda)}(h_0, h_1, \ldots, h_{m-1}))~\textup{is the generator matrix of MDS code}\\
		&\Leftrightarrow& \textup{for all } (q_0,q_1,\dots,q_{m-1})\in \mathbb{F}^m_{q},~\textup{wt}((q_0,q_1,\dots,q_{m-1})\cdot(I_m|\mathcal{C}_g^{(\lambda)}(h_0, h_1, \ldots, h_{m-1})))\geq m+1\\
		&\Leftrightarrow& \textup{wt}(q_0,q_1,\dots,q_{m-1})+\textup{wt}((q_0,q_1,\dots,q_{m-1})\cdot \mathcal{C}_g^{(\lambda)}(h_0, h_1, \ldots, h_{m-1}))\geq m+1.
	\end{eqnarray*}
	If one consider $Q_1( x)=\sum_{i=0}^{m-1}q_ix^i$, then 
	$$\textup{wt}(q_0,q_1,\dots,q_{m-1})=\textup{wt}(Q_1).$$
	From Proposition \ref{6Proposition1}, we know that $\mathcal{C}_g^{(\lambda)}(h_0, h_1, \ldots, h_{m-1})$ corresponds the multiplication by \( h (x) \) in \( \frac{\mathbb{F}_{q}[x]}{(x^m - \lambda)} \). Thus, we have
	$$\textup{wt}((q_0,q_1,\dots,q_{m-1})\cdot \mathcal{C}_g^{(\lambda)}(h_0, h_1, \ldots, h_{m-1}))=\textup{wt}(Q_1(x^g)h(x) \mod  (x^m-\lambda)).$$
	Since $\mathcal{C}_g^{(\lambda)}(h_0, h_1, \ldots, h_{m-1})$ is an involutory matrix, so we conclude  $\phi \circ \phi =I|_{\mathbb{F}_q}$, i.e., for all $\bar{Q}_1(x)\in \frac{\mathbb{F}_q[x]}{(x^m-\lambda)}$, we have
	\begin{eqnarray}\label{6Equation2}
		\bar{Q}_1(x)\notag&=&\phi\circ\phi(\bar{Q}_1(x))\\
		\notag&=&\phi(\bar{Q}_1(x^g)\bar{h}(x))\\
		&=&\bar{Q}_1(x^{g^2})\bar{h}(x^g)\bar{h}(x) .
	\end{eqnarray}
Since $g^2=m \cdot \textup{ord}(\lambda)+1$, 	Equation (\ref{6Equation2}), yields that   $h(x)h(x^g)\equiv 1 \mod (x^m-\lambda).$ 
\end{proof}
\section{Construction of 3\(\times\)3 and 4\(\times\)4 \(g\)-circulant MDS matrices}
This section presents the construction of \( g \)-circulant matrices of orders 3 and 4. Furthermore, an algorithm is provided for counting the number of such matrices.

\begin{theorem}\label{6Juarez}\cite[Theorem 5.2]{kesarwani4923673complexity}
Let $M$ be a matrix of order $m$.	If all the entries of \( M^{-1} \) are non-zero, then all the submatrices of order \( m - 1 \) of \( M \) are nonsingular.
\end{theorem}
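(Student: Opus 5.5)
The natural route is the adjugate formula for the inverse. Since the hypothesis that $M^{-1}$ exists already makes $M$ nonsingular, I would start from
\[
M^{-1}=\frac{1}{\det M}\,\operatorname{adj}(M),\qquad \operatorname{adj}(M)_{i,j}=(-1)^{i+j}\det M^{(j,i)},
\]
where $M^{(j,i)}$ is the $(m-1)\times(m-1)$ submatrix obtained from $M$ by deleting row $j$ and column $i$. This identity is the standard consequence of cofactor (Laplace) expansion, $M\operatorname{adj}(M)=\det(M)I_m$, and it holds over any field, in particular over $\mathbb{F}_q$.

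The key step is to compare entries. Every entry of $M^{-1}$ is nonzero by hypothesis. The factor $1/\det M$ is a nonzero scalar and the sign $(-1)^{i+j}$ is a unit, so
\[
(M^{-1})_{i,j}\neq 0 \iff \det M^{(j,i)}\neq 0 .
\]
Hence every minor of order $m-1$ of $M$ is nonzero.

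The remaining step is to check that this covers every submatrix of order $m-1$. A submatrix of order $m-1$ is determined by choosing one row and one column to delete, and as $(i,j)$ ranges over all $m^2$ positions, the pair $(j,i)$ ranges over all such choices. So every $(m-1)\times(m-1)$ submatrix of $M$ is nonsingular, which is the statement.

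No step here is a genuine obstacle. The only point needing care is the index transposition in the adjugate, and it is harmless because the hypothesis covers all entries of $M^{-1}$. I would also remark that the case $m=1$ is vacuous (there are no submatrices of order $0$ to check) or trivially true under the usual convention that the empty determinant is $1$. In later use for order 3 and 4, this result reduces the MDS check to the $1\times 1$ minors together with nonzero entries of $M^{-1}$, plus the $2\times 2$ minors when $m=4$.
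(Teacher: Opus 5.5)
Your adjugate argument is correct and complete. Each entry of $M^{-1}=(\det M)^{-1}\operatorname{adj}(M)$ is a nonzero multiple of a distinct $(m-1)$-minor, so all entries are nonzero exactly when all $(m-1)\times(m-1)$ submatrices are nonsingular. The paper gives no proof of its own and simply imports the result from Kesarwani et al.; your cofactor argument is the standard justification for it, and its ``if and only if'' form also supplies the converse that the ``only if'' direction of the paper's order-3 characterization implicitly relies on.
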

In view of the above theorem, we have the following results:
\begin{theorem}\label{63gcirculant}
	Let $A=C^{(1)}_g(h_0,h_1,h_2)$ be a $g$-circulant matrix associated with a  polynomial $h(x)=h_0+h_1\bar{x}+h_2\bar{x}^2$ in  $\mathcal{U}\bigg(\frac{\mathbb{F}_q[x]}{(x^3-1)}\bigg)$. Then, $A$ is MDS if and only if $\textup{wt}(h(x))=\textup{wt}(h^{-1}(x))=3.$
\end{theorem}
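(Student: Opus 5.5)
For an order-$3$ matrix, being MDS means exactly three things: all $1\times1$ minors (the entries) are nonzero, all $2\times2$ minors are nonzero, and $\det A\neq 0$. The plan is to translate each condition into a statement about $h$ and $h^{-1}$ in $\mathcal{U}\big(\mathbb{F}_q[x]/(x^3-1)\big)$.

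\textbf{Setup.} Here $\lambda=1$, so $\textup{ord}(\lambda)=1$ and any $g$ is admissible. By Lemma~\ref{6Lemma2} and the cited result of Chatterjee and Laha (the theorem preceding the counting theorem), an MDS $g$-circulant needs $\gcd(3,g)=1$. So we may take $g\equiv 1$ or $g\equiv 2 \pmod 3$, and $A$ is the circulant or the left-circulant matrix with first row $(h_0,h_1,h_2)$.

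\textbf{Entries and determinant.} In both cases every row of $A$ is a permutation of $(h_0,h_1,h_2)$. Hence all $1\times 1$ minors are nonzero if and only if $h_0h_1h_2\neq0$, i.e.\ $\textup{wt}(h)=3$. Since $h$ is a unit, Proposition~\ref{6Proposition2} applies ($\gcd(3,g)=1$ and $\gcd(h,x^3-1)=1$). Thus $\phi$ is bijective and $A$ is nonsingular automatically.

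\textbf{The $2\times 2$ minors.} By Theorem~\ref{6Juarez}, all $2\times2$ minors are nonzero as soon as all entries of $A^{-1}$ are nonzero. I would identify $A^{-1}$ explicitly.
\begin{itemize}
\item For $g=1$: $A$ is the matrix of multiplication by $\bar h$, so $A^{-1}$ is the circulant matrix of $\bar h^{-1}$. Its entries are, up to permutation in each row, the coefficients of $h^{-1}$.
\item For $g=2$: $\phi(\bar Q)=\bar Q(x^2)\bar h(x)$ and $\bar x^4=\bar x$. A direct computation gives $\phi^{-1}(\bar R)=\big(\bar R\,\bar h^{-1}\big)(x^2)$, again using $x^4\equiv x$. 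So $A^{-1}$ equals (left-circulant or circulant structure) a matrix whose rows are permutations of the coefficient vector of $h^{-1}(x^2)$, i.e.\ of $h^{-1}$.
\end{itemize}
In either case, $A^{-1}$ has all entries nonzero if and only if $\textup{wt}(h^{-1})=3$. This gives sufficiency.

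\textbf{Necessity.} This is the step I expect to be the main obstacle, since Theorem~\ref{6Juarez} is only one-directional and its converse is needed. For $3\times3$ matrices, the adjugate formula handles it: $A^{-1}=\det(A)^{-1}\operatorname{adj}(A)$, and the entries of $\operatorname{adj}(A)$ are, up to sign, exactly the $2\times2$ minors of $A$. So if $A$ is MDS, every $2\times 2$ minor is nonzero, hence every entry of $A^{-1}$ is nonzero, hence $\textup{wt}(h^{-1})=3$. Together with $\textup{wt}(h)=3$ from the entries, this proves necessity. The same adjugate observation in fact makes sufficiency independent of Theorem~\ref{6Juarez}. The only genuine computation left is the explicit form of $A^{-1}$ for $g=2$, which I would verify directly on the basis $\{1,\bar x,\bar x^2\}$.
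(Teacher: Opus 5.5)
Your argument is correct under the additional assumption $\gcd(g,3)=1$. It follows the route the paper indicates: the paper gives no separate proof and only states the result ``in view of'' Theorem~\ref{6Juarez}. You add two steps the paper leaves unstated. First, you describe $A^{-1}$ explicitly as the matrix of $\phi^{-1}(\bar R)=\bar R(x^g)\,\bar h^{-1}(x^g)$ (valid since $g^2\equiv 1 \pmod 3$), whose rows are permutations of the coefficients of $h^{-1}$. Second, you give an adjugate argument for necessity. Theorem~\ref{6Juarez} alone neither connects the entries of $A^{-1}$ to $\textup{wt}(h^{-1})$ nor gives the converse direction.

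The flaw is the reduction ``an MDS $g$-circulant needs $\gcd(3,g)=1$, so we may take $g\equiv 1,2 \pmod 3$.'' That step is legitimate when proving ``MDS $\Rightarrow$ weight condition''. In the converse direction you may not assume $A$ is MDS, and there the implication is actually false. If $g\equiv 0 \pmod 3$, then $\bar x^{g}=1$, so $\phi(\bar Q)=Q(1)\,\bar h$ and all three rows of $A$ equal $(h_0,h_1,h_2)$. Thus $A$ is singular even when $\textup{wt}(h)=\textup{wt}(h^{-1})=3$; the paper's own example $h=1+x+2x^2$ over $\mathbb{F}_5$, with $h^{-1}=1+2x+x^2$, shows this. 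Hence the theorem needs $\gcd(g,3)=1$ as a hypothesis; the paper assumes it tacitly (cf.\ the theorem preceding the counting theorem). You should impose it explicitly rather than derive it from the MDS property. With that hypothesis stated, your proof is complete.
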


\begin{theorem}\label{64gcirculant}
		Let $A=C^{(1)}_g(h_0,h_1,h_2,h_3)$ be a $g$-circulant matrix associated with polynomial $h(x)=h_0+h_1\bar{x}+h_2\bar{x}^2+h_3\bar{x}^4$ in $\mathcal{U}\bigg(\frac{\mathbb{F}_q[x]}{(x^4-1)}\bigg)$. If $\textup{wt}(h(x))=\textup{wt}(h^{-1}(x)),$ then $A$ is MDS if and only if every $2\times 2$ submatrix is non-singular.
\end{theorem}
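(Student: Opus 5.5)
The plan is to run through the orders $1$, $2$ and $4$ of square submatrices of $A$ and check which are automatic under the hypotheses. An order-$4$ matrix is MDS exactly when all of its $1\times 1$, $2\times 2$, $3\times 3$ and $4\times 4$ submatrices are nonsingular. I will show that the hypotheses already force the $1\times 1$, $3\times 3$ and $4\times 4$ conditions. What is left is then precisely the $2\times 2$ condition, which gives both directions at once.

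\textbf{Order $4$.} Because $h(x)$ is assumed to be a unit of $\mathbb{F}_q[x]/(x^4-1)$, we have $\gcd(h(x),x^4-1)=1$. For $A$ to be MDS we need $\gcd(4,g)=1$, by the preceding theorem. So we may assume $\gcd(4,g)=1$, that is, $g\in\{1,3\}$. Proposition~\ref{6Proposition2} then says the map $\phi$ is bijective, so $\det A\neq 0$.

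\textbf{Order $1$.} Every entry of $A=C^{(1)}_g(h_0,h_1,h_2,h_3)$ is one of the $h_i$, and every $h_i$ occurs as an entry. So all $1\times 1$ minors are nonzero exactly when $\textup{wt}(h)=4$. I read the hypothesis $\textup{wt}(h)=\textup{wt}(h^{-1})$ as meaning full weight $4$, as in Theorem~\ref{63gcirculant}; this reading is needed for the statement to be true. In particular, if $A$ is MDS then the weights are $4$, so nothing is lost.

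\textbf{Order $3$.} I will invoke Theorem~\ref{6Juarez}, so I need all entries of $A^{-1}$ to be nonzero. The inverse of $\phi$ is $\bar Q(x)\mapsto \bar Q(x^{g'})\,\overline{h(x^{g'})^{-1}}$, where $g'$ is the inverse of $g$ modulo $4$ (here $g'=g$). I would check this directly: $\phi(\bar Q(x^{g'})\,\overline{h(x^{g'})^{-1}})=\bar Q(x^{gg'})\,\overline{h(x^{gg'})^{-1}}\,\bar h(x)=\bar Q(x)$, using $x^4=1$. So $A^{-1}$ is again a $g'$-circulant matrix, with defining polynomial $h^{-1}(x^{g'})$. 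Substituting $x\mapsto x^{g'}$ only permutes exponents by Lemma~\ref{6Lemma2}, so this polynomial has the same multiset of coefficients as $h^{-1}$. Hence the entries of $A^{-1}$ are exactly the coefficients of $h^{-1}$, and these are all nonzero because $\textup{wt}(h^{-1})=4$. Theorem~\ref{6Juarez} then gives that every $3\times3$ submatrix of $A$ is nonsingular.

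\textbf{Conclusion and main obstacle.} With the $1\times 1$, $3\times 3$ and $4\times 4$ conditions settled, $A$ is MDS if and only if every $2\times2$ submatrix is nonsingular; one direction is trivial and the other is what was just shown. The main obstacle is the identification of $A^{-1}$ as a $g'$-circulant matrix whose entries are the coefficients of $h^{-1}$. The subtle points there are the row convention for the matrix of $\phi$ and the fact that $\phi^{-1}$ is not multiplication by $h^{-1}$ unless $g=1$. I would verify this carefully for $g=3$, where the matrix is left-circulant.
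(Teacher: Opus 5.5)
Your argument is the route the paper intends, and it is correct for odd $g$. The paper gives no separate proof; it states the result as a consequence of Theorem~\ref{6Juarez}. Your identification of $A^{-1}$ as the $g'$-circulant matrix of $h^{-1}(x^{g'})$ is exactly the step the paper leaves implicit. A quicker check that avoids the convention worries: $A=\Pi\,C_h$ with $\Pi_{i,ig \bmod 4}=1$, so $A^{-1}=C_{h^{-1}}\Pi^{-1}$ has the coefficients of $h^{-1}$ as its entries. Your readings $\textup{wt}(h)=\textup{wt}(h^{-1})=4$ and $\bar x^3$ in place of the typo $\bar x^4$ are also the intended ones.

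One step needs repair: the reduction to $\gcd(4,g)=1$. You justify it by noting that an MDS matrix forces $\gcd(4,g)=1$. That only covers the direction MDS $\Rightarrow$ ($2\times 2$ condition), which is trivial anyway. For the converse you must also show that the $2\times 2$ condition fails when $g$ is even. This takes one line. Since $a_{i,j}=h_{j-ig \bmod 4}$, rows $i$ and $i+2$ coincide when $g\equiv 2 \pmod{4}$, and all rows coincide when $g\equiv 0 \pmod{4}$. A $2\times 2$ submatrix taken from two equal rows is singular, so both sides of the equivalence are false. With this added, your proof is complete.
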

The following algorithms are designed to facilitate the construction of $g$-circulant matrices satisfying the MDS property as characterized in the preceding theorems. Specifically, they aim to identify invertible polynomials of prescribed weight in the quotient rings $\mathbb{F}_q[x]/(x^3 - 1)$ and $\mathbb{F}_q[x]/(x^4 - 1)$, which serve as generating polynomials for the matrix entries. The first algorithm searches for polynomials $h(x)$ of Hamming weight $3$ such that both $h(x)$ and its inverse in $\mathbb{F}_q[x]/(x^3 - 1)$ have weight $3$, aligning with the condition in Theorem \ref{63gcirculant}. The second algorithm targets the case addressed in Theorem \ref{64gcirculant}, focusing on invertible polynomials in $\mathbb{F}_q[x]/(x^4 - 1)$ with a weight constraint that ensures the associated  $g$-circulant matrix is MDS, provided all $2 \times 2$ submatrices are non-singular. These algorithms significantly reduce the search space for constructing  $g$-circulant MDS matrices by filtering polynomials based on both invertibility and weight conditions.

\begin{algorithm}
	\caption{\small{:  Invertible polynomials of  Weight-3 in $\mathbb{F}_q[x]/(x^3 - 1)$}}
	\small{
		\begin{algorithmic}[1]\label{6Algo1}
			\Procedure{Find\_Weight3\_Invertible}{$q$}
			\State $F \gets \text{GF}(q)$
			\State $R_{\text{mod}} \gets F[x]/(x^3 - 1)$
			\State $x \gets$ generator of $R_{\text{mod}}$
			\For{$i = 1$ to $1000$}
			\State Choose random $a_0, a_1, a_2 \in F \setminus \{0\}$
			\State $h(x) \gets a_0 + a_1 x + a_2 x^2$ in $R_{\text{mod}}$
			\If{$h(x)$ is invertible in $R_{\text{mod}}$}
			\State $h^{-1}(x) \gets$ inverse of $h(x)$ in $R_{\text{mod}}$
			\If{HammingWeight($h^{-1}(x)$) $= 3$}
			\State \textbf{return} $h(x), h^{-1}(x)$
			\EndIf
			\EndIf
			\EndFor
			\State \textbf{return} `No valid $h(x)$ found'
			\EndProcedure
		\end{algorithmic}
	}
\end{algorithm}
\begin{example}
	 Consider $\mathbb{Z}_5$ be the finite field with 5 elements. Using Algorithm 1 implemented in SageMath, we found 8 distinct polynomials of the form \( h(x) = h_0 + h_1 x + h_2 x^2 \) in the unit group \( \mathcal{U}\bigg(\frac{\mathbb{F}_5[x]}{(x^3 - 1)}\bigg) \) such that both \( h(x) \) and its inverse \( h^{-1}(x) \) have weight 3. According to Theorem~\ref{63gcirculant}, the \(g\)-circulant matrices associated with these polynomials are MDS of order 3. The identified polynomials and their inverses are as follows:
	
	\begin{itemize}
		\item[(i)] \( h(x) = 2x^2 + x + 1,\quad h^{-1}(x) = x^2 + 2x + 1 \)
		\item [(ii)]\( h(x) = 3x^2 + 3x + 1,\quad h^{-1}(x) = 2x^2 + 2x + 4 \)
		\item[(iii)] \( h(x) = x^2 + x + 2,\quad h^{-1}(x) = x^2 + x + 2 \)
		\item [(iv)]\( h(x) = 4x^2 + 2x + 2,\quad h^{-1}(x) = 3x^2 + x + 3 \)
		\item[(v)] \( h(x) = 2x^2 + 4x + 2,\quad h^{-1}(x) = x^2 + 3x + 3 \)
		\item[(vi)] \( h(x) = 4x^2 + 4x + 3,\quad h^{-1}(x) = 4x^2 + 4x + 3 \)
		\item[(vii)] \( h(x) = 4x^2 + 3x + 4,\quad h^{-1}(x) = 3x^2 + 4x + 4.\)
	\end{itemize}
\end{example}

\begin{algorithm}
	\caption{: Invertible polynomials of  Weight-4 in $\mathbb{F}_q[x]/(x^4 - 1)$}
	\small{
		\begin{algorithmic}[1]
			\Procedure{Find\_Weight4\_Invertible}{$q$}
			\State $F \gets \text{GF}(q)$
			\State $R_{\text{mod}} \gets F[x]/(x^4 - 1)$
			\State $x \gets$ generator of $R_{\text{mod}}$
			\For{$i = 1$ to $1000$}
			\State Choose random $a_0, a_1, a_2, a_3 \in F \setminus \{0\}$
			\State $h(x) \gets a_0 + a_1 x + a_2 x^2 + a_3 x^3 \textup{ in } R_{\text{mod}}$
			\If{$h(x)$ is invertible}
			\State $h^{-1}(x) \gets$ inverse of $h(x)$ in $R_{\text{mod}}$
			\If{HammingWeight($h^{-1}(x)$) $= 4$}
			\State \textbf{return} $h(x), h^{-1}(x)$
			\EndIf
			\EndIf
			\EndFor
			\State \textbf{return} ``No valid $h(x)$ found''
			\EndProcedure
		\end{algorithmic}
	}
\end{algorithm}
\section{Variant of \(g\)-circulant MDS matrices involving skew polynomial ring}

	Let \(\theta:\mathbb{F}_{q} \longrightarrow \mathbb{F}_{q}\) be an automorphism. A skew polynomial ring over a field \(\mathbb{F}_{q}\) with automorphism \(\theta\), denoted \(\mathbb{F}_{q}[X; \theta]\), is defined as the set  
\[
\mathbb{F}_{q}[X; \theta] := \left\{ a_0 + a_1 X + \dots + a_{n} X^{n} : a_i \in \mathbb{F}_{q}, \, n \in \mathbb{N}\cup\{0\} \right\},
\]  
equipped with the usual addition of polynomials and a multiplication followed by the rule  
\begin{eqnarray*}
	X * a &=& \theta(a) X \quad \text{for all } a \in \mathbb{F}_{q},\\
	a*X&=& aX \quad \text{for all } a \in \mathbb{F}_{q}.
\end{eqnarray*}
Recall that an element $f\langle X \rangle \in \mathbb{F}_{q}[X; \theta] $ is called \emph{central} if $f\langle X \rangle\ast g\langle X \rangle = g\langle X \rangle\ast f\langle X \rangle$ for all $g\langle X \rangle \in \mathbb{F}_{q}[X; \theta] $, 
i.e., $f\langle X \rangle$ commutes with every element of $\mathbb{F}_{q}[X; \theta] $. The set of all elements of $\mathbb{F}_{q}[X; \theta] $ that commute with every element of $\mathbb{F}_{q}[X; \theta] $ is called the \emph{center} of $\mathbb{F}_{q}[X; \theta] $, which is defined and denoted by
\[
Z(\mathbb{F}_{q}[X; \theta] ) = \{ f\langle X \rangle \in \mathbb{F}_{q}[X; \theta]  : f\langle X \rangle\ast g\langle X \rangle = g\langle X \rangle\ast f\langle X \rangle \text{ for all } g\langle X \rangle \in \mathbb{F}_{q}[X; \theta]  \}.
\]
 We define the fixed field of \(\mathbb{F}_q\) under the mapping \(\theta\) as
\[
(\mathbb{F}_q)^\theta = \{a \in \mathbb{F}_q : \theta(a) = a\}.
\]

In the following proposition, we collect some properties of the skew polynomial ring $\mathbb{F}_q[X;\theta]$:

\begin{proposition}\label{6Proposition3} (\cite{ore1933theory} p. 483-486)
	Let $s$ be the order of the automorphism $\theta$, and denote by $(\mathbb{F}_q)^\theta$ the fixed subfield of $\mathbb{F}_q$ by $\theta$. Then,
	\begin{enumerate}
		\item [(i)]  The ring $\mathbb{F}_q[X;\theta]$ is, in general, a non-commutative ring unless $\theta$ is the identity automorphism of $\mathbb{F}_q$.
		\item [(ii)] An element $f\langle X \rangle \in \mathbb{F}_q[X;\theta]$ is central if and only if $f\langle X \rangle$ belongs to $(\mathbb{F}_q)^\theta[X^s]$, i.e., 
		\[ Z(\mathbb{F}_q[X;\theta]) = (\mathbb{F}_q)^\theta[X^s]. \]
		\item [(iii)]  Two-sided ideals of $\mathbb{F}_q[X;\theta]$ are generated by elements of the form $f\langle X^s\rangle\ast X^m$, where $m$ is an integer and $f\langle X \rangle \in \mathbb{F}_q[X;\theta]$.
		\item [(iv)] The ring $\mathbb{F}_q[X;\theta]$ is a right (resp. left) Euclidean domain.
	\end{enumerate}
\end{proposition}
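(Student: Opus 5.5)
The plan is to prove the four items directly from the multiplication rule $X*a=\theta(a)X$. Two facts will be used throughout. First, by induction, $X^i*a=\theta^i(a)X^i$. Second, a product of two nonzero skew polynomials of degrees $d_1,d_2$ has degree $d_1+d_2$, with leading coefficient $a\,\theta^{d_1}(b)\neq 0$; this holds because $\theta$ is injective. I will prove (iv) first, since (iii) depends on it, and then (i), (ii), (iii).

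\textbf{Step (iv).} I would prove a right division algorithm. Given $f$ and $g\neq 0$ with $\deg g=e$, leading coefficient $b$, and $\deg f=d\ge e$ with leading coefficient $a$, subtract $a\,\theta^{d-e}(b^{-1})X^{d-e}*g$ from $f$. This kills the leading term, so induction on $d$ gives $f=u*g+r$ with $\deg r<e$. The left version is symmetric: subtract $g*c X^{d-e}$, where $c=\theta^{-e}(a b^{-1})$. This uses that $\theta$ is bijective, which holds since $\mathbb{F}_q$ is finite. Consequently every left (resp. right) ideal is generated by any of its nonzero elements of minimal degree.

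\textbf{Steps (i) and (ii).} For (i), if $\theta\neq\mathrm{id}$, pick $a$ with $\theta(a)\neq a$; then $X*a=\theta(a)X\neq aX=a*X$. For (ii), since $\mathbb{F}_q$ and $X$ generate the ring, $f=\sum a_iX^i$ is central if and only if it commutes with every $b\in\mathbb{F}_q$ and with $X$.
\begin{itemize}
\item The condition $f*X=X*f$ reads $\sum a_iX^{i+1}=\sum\theta(a_i)X^{i+1}$, so every $a_i\in(\mathbb{F}_q)^\theta$.
\item The condition $b*f=f*b$ reads $a_ib=a_i\theta^i(b)$ for all $b$. So $a_i\neq0$ forces $\theta^i=\mathrm{id}$, i.e.\ $s\mid i$.
\end{itemize}
Conversely, every element of $(\mathbb{F}_q)^\theta[X^s]$ commutes with the generators, hence is central.

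\textbf{Step (iii).} This is the main obstacle. Let $I\neq0$ be a two-sided ideal and take $f\in I$ monic of minimal degree $d$; by (iv), $I=\mathbb{F}_q[X;\theta]*f$. Write $f=\sum_{i}f_iX^i$ and let $m$ be the smallest index with $f_m\neq0$.
\begin{itemize}
\item For $a\in\mathbb{F}_q$, the element $f*a-\theta^d(a)f$ lies in $I$ and has degree $<d$, so it is zero. Comparing coefficients gives $f_i\theta^i(a)=\theta^d(a)f_i$, hence $f_i\neq0\Rightarrow s\mid d-i$.
\item Similarly, $f*X-X*f\in I$ has degree $\le d$, so it equals $c f$ for some constant $c$. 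Its top coefficient is $1-\theta(1)=0$, so $c=0$ and $f*X=X*f$. This gives $\theta(f_i)=f_i$.
\end{itemize}
Hence every exponent $i$ with $f_i\neq 0$ is congruent to $d$, and hence to $m$, modulo $s$. So $f=f'\langle X^s\rangle * X^m$ with $f'$ having coefficients in the fixed field, which is the claimed shape. The delicate point I expect is bookkeeping. One must check that the minimal-degree generator of the left ideal also generates $I$ as a two-sided ideal, and one must handle the factor $X^m$, using that $X^m$ is normal since $X^m*a=\theta^m(a)X^m$.
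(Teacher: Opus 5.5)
The paper gives no proof of this proposition; it only cites Ore's 1933 paper (pp.~483--486). Your proposal is therefore a different route: a self-contained verification directly from $X*a=\theta(a)X$. This makes the result independent of Ore's general theory, at the cost of some coefficient bookkeeping. Your arguments for (i), (ii) and (iv) are correct as written: degree additivity gives the domain property, and both division algorithms are right, the left one using surjectivity of $\theta$.

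In (iii), your first bullet is correct and already suffices for the statement as the paper phrases it, with $f\langle X\rangle$ arbitrary in $\mathbb{F}_q[X;\theta]$. It places every exponent $i$ with $f_i\neq0$ in the class of $m$ modulo $s$, so $f=f'\langle X^s\rangle*X^m$. The generator question you flagged is also immediate: $I=\mathbb{F}_q[X;\theta]*f$ is contained in the two-sided ideal generated by $f$, which is contained in $I$.

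The one faulty step is your justification of $c=0$ in the second bullet. Since $f*X-X*f=\sum_i\bigl(f_i-\theta(f_i)\bigr)X^{i+1}$, the number $1-\theta(1)$ is its coefficient of $X^{d+1}$. That shows the degree is at most $d$, but it says nothing about $c$. Comparing coefficients of $X^d$ in $f*X-X*f=cf$ actually gives $c=f_{d-1}-\theta(f_{d-1})$.

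The repair is to compare coefficients of $X^m$ instead. The left side involves only exponents $\ge m+1$, while the right side has $cf_m$ with $f_m\neq0$, so $c=0$. Alternatively, for $s\ge 2$ your first bullet gives $f_{d-1}=0$, and for $s=1$ the ring is commutative.

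With this fix you get a sharper conclusion: $f'$ has coefficients in $(\mathbb{F}_q)^\theta$, so $f'\langle X^s\rangle$ is central. This is Ore's actual formulation and the meaningful one. With arbitrary coefficients, the left ideal generated by an element of the stated form need not be two-sided. For example, for $s\ge 2$ and $\theta(c)\neq c$, one checks that $(1+cX^s)*X\notin\mathbb{F}_q[X;\theta]*(1+cX^s)$.
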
 
From the second and third parts of the Proposition~\ref{6Proposition3}, we can easily conclude that \(J = (X^m - \lambda)\) is a (two-sided) ideal of the skew polynomial ring \(\mathbb{F}_q[X;\theta]\), provided that \(\textup{ord}(\theta) \mid m\) and \(\lambda \in (\mathbb{F}_q)^\theta\). In this case, the quotient ring
\[
\frac{\mathbb{F}_q[X;\theta]}{(X^m - \lambda)}
\]
is well-defined. Each element of this quotient ring can be uniquely represented as a left coset of the form
\[
f\langle X \rangle + (X^m - \lambda),
\]
where \(f\langle X \rangle \in \mathbb{F}_q[X;\theta]\) is a skew polynomial of degree less than \(m\). That is, every element can be written as
\[
a_0 + a_1 \bar{X} + a_2 \bar{X}^2 + \cdots + a_{m-1} \bar{X}^{m-1},
\]
where \(a_i \in \mathbb{F}_q\), and \(\bar{X} = X + (X^m - \lambda)\) is the class of \(X\) in the quotient ring. Multiplication in this ring is induced by the skew multiplication in \(\mathbb{F}_q[X;\theta]\), which satisfies the rule
\[
X\ast a = \theta(a)X \quad \text{for all } a \in \mathbb{F}_q.
\]
\quad Now, we have following remarks:
\begin{remark}
If we consider a non-identity automorphism \(\theta: \mathbb{F}_q \rightarrow \mathbb{F}_q\), then the skew polynomial ring \(\mathbb{F}_q[X; \theta]\) is non-commutative. Moreover, \(\frac{\mathbb{F}_q[X; \theta]}{(X^m - \lambda)}\) is a left \(\mathbb{F}_q\)-vector space of dimension \(m\).
\end{remark}
\begin{remark}
Let \( f\langle X \rangle,~g\langle X \rangle \in \mathbb{F}_q[X;\theta] \). Then, 
$
f\langle X \rangle \equiv g\langle X \rangle \mod_\ast (X^m - \lambda)
$
if and only if there exists \( t\langle X \rangle \in \mathbb{F}_q[X;\theta] \) such that
$
f\langle X \rangle - g\langle X \rangle = t\langle X \rangle \ast (X^m - \lambda).
$
\end{remark}
In light of the preceding remark, we now present the following results:

\begin{proposition}\label{6Proposition4}
	Let $h\langle X \rangle=X^m-\lambda+\sum_{i=0}^{m-1}h_iX^i\in \mathbb{F}_q[X,\theta],$ where $\lambda\in ((\mathbb{F}_q)^\theta)^\ast$ such that $g\equiv1\mod \textup{ord}(\lambda)$. Then, the map 
	\begin{eqnarray*}
		\chi:\frac{\mathbb{F}_q[X,\theta]}{(X^m-\lambda)} \longrightarrow \frac{\mathbb{F}_q[X,\theta]}{(X^m-\lambda)}\\		\chi(\bar{Q}\langle X \rangle):=\bar{Q}\langle X^g\rangle *\bar{h}\langle X \rangle,
	\end{eqnarray*}
	is a vector space linear transformation.
\end{proposition}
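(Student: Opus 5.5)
The plan mirrors the proof of Proposition~\ref{6Proposition1}, with two extra issues coming from the skew setting. The first is well-definedness of $\bar{Q}\langle X\rangle\mapsto \bar{Q}\langle X^g\rangle$ on the quotient. The second is that linearity is only with respect to the \emph{left} $\mathbb{F}_q$-vector space structure, so every scalar must stay to the left of powers of $X$.

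For well-definedness I will prove the skew analogue of Lemma~\ref{6Lemma1}: $X^{mg}-\lambda\in (X^m-\lambda)$ when $g\equiv 1 \bmod \textup{ord}(\lambda)$. Since $\lambda\in(\mathbb{F}_q)^\theta$, $\lambda$ commutes with $X$ in $\mathbb{F}_q[X;\theta]$. Hence $X^m$ and $\lambda$ commute, and the commutative factorization
\[
X^{mg}-\lambda^g=\Big(\sum_{j=0}^{g-1}\lambda^{g-1-j}X^{mj}\Big)\ast(X^m-\lambda)
\]
holds verbatim. Together with $\lambda^g=\lambda$, this gives $X^{mg}\equiv\lambda \bmod_\ast (X^m-\lambda)$.

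Now let $Q-Q'=t\ast(X^m-\lambda)$, and write $t=\sum_j t_jX^j$. Substituting $X\mapsto X^g$ coefficientwise sends $t_jX^j\ast(X^m-\lambda)=t_jX^{j+m}-t_j\lambda X^j$ to $t_jX^{g(j+m)}-t_j\lambda X^{gj}=t_jX^{gj}\ast(X^{gm}-\lambda)$. The last equality uses again that $\lambda$ is central, since it is $\theta$-fixed. So $Q\langle X^g\rangle-Q'\langle X^g\rangle=t\langle X^g\rangle\ast(X^{mg}-\lambda)$, and by the previous step this lies in the two-sided ideal $(X^m-\lambda)$. Two-sidedness is guaranteed by Proposition~\ref{6Proposition3}, under the standing assumption $\textup{ord}(\theta)\mid m$. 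Since $(X^m-\lambda)$ is two-sided, right multiplication by $\bar h\langle X\rangle$ is well defined on the quotient. So $\chi$ is well defined.

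Linearity is then routine. Write $\bar Q_1=\sum a_i\bar X^i$ and $\bar Q_2=\sum b_i\bar X^i$. Substitution is additive coefficientwise, and right multiplication by $\bar h$ distributes over addition, so $\chi(\bar Q_1+\bar Q_2)=\chi(\bar Q_1)+\chi(\bar Q_2)$. For a scalar $\alpha\in\mathbb{F}_q$ acting on the left, $\alpha\bar Q_1=\sum \alpha a_i\bar X^i$ maps to $\big(\sum\alpha a_i\bar X^{gi}\big)\ast\bar h=\alpha\ast\big(\big(\sum a_i\bar X^{gi}\big)\ast\bar h\big)$ by associativity, since left multiplication by a constant does not interact with $\theta$. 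Thus $\chi$ is left $\mathbb{F}_q$-linear.

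The main obstacle is the well-definedness step. One has to check that substitution $X\mapsto X^g$ is compatible with the skew multiplication inside $t\ast(X^m-\lambda)$. In general substitution is not a ring map of $\mathbb{F}_q[X;\theta]$, because $X^g a=\theta^g(a)X^g$. This is exactly why I only use substitution on products with the central factor $X^m-\lambda$. If needed, I would additionally invoke $g\equiv1 \bmod \textup{ord}(\theta)$ or $\textup{ord}(\theta)\mid m$ to control $\theta^g$.
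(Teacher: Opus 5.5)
Your proof is correct and follows the same outline as the paper's proof: well-definedness via the congruence $X^{mg}\equiv\lambda \bmod_\ast (X^m-\lambda)$, followed by a routine check of additivity and left $\mathbb{F}_q$-homogeneity. You are more careful than the paper, which just cites the commutative Lemma~\ref{6Lemma1}: you check that $\lambda\in(\mathbb{F}_q)^\theta$ commutes with $X$, and that the substitution $X\mapsto X^g$ sends $t\ast(X^m-\lambda)$ to $t\langle X^g\rangle\ast(X^{mg}-\lambda)$; this only uses $\theta(\lambda)=\lambda$, so the extra hypothesis on $g \bmod \textup{ord}(\theta)$ you hedge with at the end is never needed.
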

\begin{proof}
	
	As a consequence of Lemma \ref{6Lemma1}, we see that \( \chi \) is well-defined. Next, we show that \( \chi \) is a linear transformation.
	For this, let \( \bar{Q}_1\langle X \rangle=a_0+a_1\bar{X}+\cdots+a_{m-1}\bar{X}^{m-1},~ \bar{Q}_2\langle X \rangle= b_0+b_1\bar{X}+\cdots+b_{m-1}\bar{X}^{m-1} \in \frac{\mathbb{F}_q[X,\theta]}{(X^m - \lambda)} \). Then, we compute
	\begin{eqnarray*}
		\chi(\bar{Q}_1\langle X \rangle+\bar{Q}_2\langle X \rangle)&=&\chi((a_0+b)+(a_1+b_1)\bar{X}+\cdots+(a_{m-1}+b_{m-1})\bar{X}^{m-1})\\
		&=&((a_0+b)+(a_1+b_1)\bar{X}^g+\cdots+(a_{m-1}+b_{m-1})\bar{X}^{g(m-1)})\ast \bar{h}\langle X \rangle\\
		&=&(a_0+a_1\bar{X}^g+\cdots+a_{m-1}\bar{X}^{g(m-1)})*\bar{h}\langle X \rangle+(b_0+b_1\bar{X}^g+\cdots+b_{m-1}\bar{X}^{g(m-1)})\\
		&~&\ast \bar{h}\langle X \rangle\\
		&=&\chi(\bar{Q}_1\langle X \rangle)+\chi(\bar{Q}_2\langle X \rangle).
	\end{eqnarray*}
	For $\alpha\in \mathbb{F}_q$, we have
	\begin{eqnarray*}
		\chi(\alpha \bar{Q}_1\langle X \rangle)&=&\chi(a_0\alpha+a_1\alpha \bar{X}+\cdots+\alpha a_{m-1}\bar{X}^{m-1})\\
		&=&(a_0\alpha+a_1\alpha \bar{X}^g+\cdots+\alpha a_{m-1}\bar{X}^{g(m-1)})\ast \bar{h}\langle X \rangle\\
		&=&\alpha(a_0\alpha+a_1\alpha \bar{X}^g+\cdots+\alpha a_{m-1}\bar{X}^{g(m-1)})\ast \bar{h}\langle X \rangle\\
		&=&\alpha\chi(\bar{Q}_1\langle X \rangle).
	\end{eqnarray*} 
	Thus, $\chi$ is a vector space linear transformation.
\end{proof}
	\begin{remark}
		Matrix associated with linear transformation in Proposition \ref{6Proposition4} with respect to the  basis $B=\{1,\bar{X},\bar{X}^2,\dots,\bar{X}^{m-1}\}$, is called consta-$\theta_g$-circulant matrix. We use the notation $\mathcal{C}_{\theta_g}^{(\lambda)}(h_0, h_1, \ldots, h_{m-1})$ to denote a consta-$\theta_g$-circulant matrix associated with the skew polynomial
	$
		h\langle X \rangle = h_0 + h_1 X + h_2 X^2 + \cdots + h_{m-1} X^{m-1} +X^m-\lambda\in \mathbb{F}_q[X; \theta]$
\end{remark}
Along similar lines as Proposition \ref{6counting1}, we establish the following result:
\begin{proposition}
The upper bound of the total number of consta-$\theta_g$-\textup{circulant} matrix associated with the Proposition \ref{6Proposition4} is $\left( \left\lfloor \frac{m - 1}{\operatorname{ord}(\lambda)} \right\rfloor + 1 \right) \cdot q^m$, where  $\lfloor \cdot \rfloor$ represent the greatest integer function function.
\end{proposition}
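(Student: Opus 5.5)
The plan is to follow the proof of Proposition~\ref{6counting1} step by step, changing only what the skew setting requires. Put $r=\operatorname{ord}(\lambda)$, where now $\lambda\in((\mathbb{F}_q)^\theta)^\ast$.

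First I will check when the map $\chi$ of Proposition~\ref{6Proposition4} is defined. Since $\lambda$ is fixed by $\theta$, the argument of Lemma~\ref{6Lemma1} carries over. When $g\equiv 1 \pmod r$ we have $\lambda^g=\lambda$. The element $X^m$ commutes with $\lambda$, because $\theta(\lambda)=\lambda$. Hence the ordinary factorization
\[
X^{mg}-\lambda^g=(X^{m(g-1)}+\lambda X^{m(g-2)}+\cdots+\lambda^{g-1})\ast(X^m-\lambda)
\]
holds in $\mathbb{F}_q[X;\theta]$. By Proposition~\ref{6Proposition3} and the assumption $\operatorname{ord}(\theta)\mid m$, the ideal $(X^m-\lambda)$ is two-sided. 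Therefore $\bar X^{g}$ is well defined on the quotient, and so is the substitution $\bar Q\langle X\rangle\mapsto \bar Q\langle X^g\rangle$.

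Next I will count the admissible shifts. As in Proposition~\ref{6counting1}, I restrict to $g\in\{1,\dots,m\}$ and require $g\equiv1\pmod r$. This gives $g=1+kr$ with $0\le k\le\lfloor (m-1)/r\rfloor$, so there are $\lfloor (m-1)/r\rfloor+1$ values of $g$.

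Then I will count the choices of $h$. The quotient $\mathbb{F}_q[X;\theta]/(X^m-\lambda)$ is a left $\mathbb{F}_q$-vector space of dimension $m$ with basis $\{1,\bar X,\dots,\bar X^{m-1}\}$, as noted in the remark above. The class $\bar h\langle X\rangle$ is determined by $(h_0,\dots,h_{m-1})\in\mathbb{F}_q^m$, which gives $q^m$ choices.

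Each pair $(g,\bar h)$ determines the map $\chi$, and hence a single matrix $\mathcal{C}_{\theta_g}^{(\lambda)}(h_0,\dots,h_{m-1})$ with respect to the basis $B$. The map from pairs to matrices is therefore a surjection onto the set of all such matrices. It follows that the number of consta-$\theta_g$-circulant matrices is at most $(\lfloor (m-1)/r\rfloor+1)\,q^m$. This gives an upper bound, not an exact count, since distinct pairs may produce the same matrix.

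The only step that is not routine is the first one. One must justify that the substitution $X\mapsto X^g$ is well defined modulo the two-sided ideal when multiplication is skew. This uses $\lambda\in(\mathbb{F}_q)^\theta$ and the centrality of $X^m-\lambda$ given by Proposition~\ref{6Proposition3}(ii),(iii). Once that is in place, the rest is pure counting and is identical to the proof of Proposition~\ref{6counting1}.
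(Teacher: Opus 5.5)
Your proposal is correct and matches the paper. The paper gives no separate proof and only says the result follows along the lines of Proposition~\ref{6counting1}: count the $\lfloor (m-1)/\operatorname{ord}(\lambda)\rfloor+1$ admissible shifts $g\in\{1,\dots,m\}$ with $g\equiv 1 \pmod{\operatorname{ord}(\lambda)}$, then multiply by the $q^m$ choices of $\bar h$. Your added check that $\chi$ is well defined in the skew setting is sound, but it is not needed here, since Proposition~\ref{6Proposition4} already takes it as given.
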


\begin{theorem}\label{6Theorem2}
	Let $g^2=m \cdot \textup{ord}(\lambda)+1$, and $h\langle X \rangle=(X^m-\lambda)+\sum_{i=0}^{m-1}h_iX^i\in \mathbb{F}_q[X;\theta]$. Then, $\mathcal{C}_{\theta_g}^{(\lambda)}(h_0, h_1, \ldots, h_{m-1})$ is an involutory MDS matrix if and only if:
	\begin{enumerate}
		\item [(i)] for all $\bar{Q}_1\langle X \rangle\in \frac{\mathbb{F}_q[X;~\theta]}{(X^m-\lambda)}$, we have 
		$$\textup{wt}(\bar{Q}_1\langle X \rangle)+\textup{wt}(\bar{Q}_1\langle X^g\rangle\ast \bar{h}\langle X \rangle \mod_\ast (X^m-\lambda))\geq m+1.$$
		\item [(ii)] $h\langle X \rangle\ast h\langle X^g\rangle \equiv 1 \mod_\ast (X^m-\lambda).$
	\end{enumerate} 
\end{theorem}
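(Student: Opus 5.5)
The proof mirrors the commutative argument for the preceding theorem on involutory MDS consta-$g$-circulant matrices. Throughout we regard $\frac{\mathbb{F}_q[X;\theta]}{(X^m-\lambda)}$ as a left $\mathbb{F}_q$-vector space with basis $B=\{1,\bar{X},\dots,\bar{X}^{m-1}\}$. Since $\textup{ord}(\theta)\mid m$ and $\lambda\in(\mathbb{F}_q)^\theta$, the ideal $(X^m-\lambda)$ is two-sided. By construction, $\mathcal{C}_{\theta_g}^{(\lambda)}(h_0,\dots,h_{m-1})$ is the matrix of $\chi$ from Proposition~\ref{6Proposition4}: if $\bar{Q}_1\langle X\rangle=\sum q_i\bar{X}^i$ has coordinate row vector $(q_0,\dots,q_{m-1})$, then $(q_0,\dots,q_{m-1})\cdot\mathcal{C}_{\theta_g}^{(\lambda)}$ is the coordinate vector of $\chi(\bar{Q}_1)=\bar{Q}_1\langle X^g\rangle\ast\bar{h}\langle X\rangle \bmod_\ast (X^m-\lambda)$.

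\textbf{MDS part, equivalent to (i).} First, a square matrix $A$ is MDS iff $[I_m\mid A]$ generates an $[2m,m,m+1]$ code, by the lemma of \cite{macwilliams1977theory}. Every codeword is $(v, vA)$ with $v\in\mathbb{F}_q^m$, so this holds iff $\textup{wt}(v)+\textup{wt}(vA)\ge m+1$ for all nonzero $v$. Second, identify $v$ with $\bar{Q}_1\langle X\rangle$. Then $\textup{wt}(v)=\textup{wt}(\bar{Q}_1)$, and $\textup{wt}(vA)$ is the weight of the reduced representative of $\chi(\bar{Q}_1)$. This translates the MDS property exactly into (i), read for nonzero $\bar{Q}_1$. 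The only point needing care is that $X^{gi}$ reduces modulo $X^m-\lambda$ to a scalar in $(\mathbb{F}_q)^\theta$ times a power of $\bar{X}$. This holds because $\lambda$ is central, so the reduction creates no twisting of coefficients.

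\textbf{Involutory part, equivalent to (ii).} A matrix is involutory iff $\chi\circ\chi=\mathrm{id}$. We would compute
\[
\chi(\chi(\bar{Q}_1))=\bigl(\bar{Q}_1\langle X^g\rangle\ast\bar{h}\langle X\rangle\bigr)\langle X^g\rangle\ast\bar{h}\langle X\rangle .
\]
Substituting $X\mapsto X^g$ must be checked to be multiplicative on the relevant products, i.e.\ $(f\ast k)\langle X^g\rangle=f\langle X^g\rangle\ast k\langle X^g\rangle$. This requires $\theta^{g}=\theta$ on coefficients, which holds when $g\equiv 1 \bmod \textup{ord}(\theta)$. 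We would argue this from the hypothesis $g^2=m\,\textup{ord}(\lambda)+1$ together with $\textup{ord}(\theta)\mid m$, adding it as an implicit standing assumption if needed. Granting it, the composite equals $\bar{Q}_1\langle X^{g^2}\rangle\ast\bar{h}\langle X^g\rangle\ast\bar{h}\langle X\rangle$. Because $g^2-1=m\,\textup{ord}(\lambda)$, we get $X^{g^2}=X\cdot(X^m)^{\textup{ord}(\lambda)}\equiv\lambda^{\textup{ord}(\lambda)}X=X$, so $\bar{Q}_1\langle X^{g^2}\rangle=\bar{Q}_1$. Hence $\chi^2=\mathrm{id}$ iff $\bar{Q}_1\ast(\bar{h}\langle X^g\rangle\ast\bar{h}\langle X\rangle-1)=0$ for all $\bar{Q}_1$. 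Taking $\bar{Q}_1=1$ gives necessity, and sufficiency is immediate. The resulting condition is $h\langle X^g\rangle\ast h\langle X\rangle\equiv1$.

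\textbf{Main obstacle.} The hardest step is reconciling this with the stated order $h\langle X\rangle\ast h\langle X^g\rangle\equiv1$ in (ii). The quotient ring is finite, and in a finite ring a one-sided inverse is two-sided: $ab=1$ forces left multiplication by $a$ to be surjective, hence injective, so $ba=1$. This converts one order into the other and completes the equivalence.
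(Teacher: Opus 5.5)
\textbf{Verdict: genuine gap, and it cannot be filled from the stated hypotheses.} Your route is the paper's route: translate MDS into (i) via $[I_m\mid C]$, then compute $\chi\circ\chi$. Several of your refinements are correct and needed:
\begin{itemize}
\item reading (i) only for $\bar Q_1\neq 0$;
\item getting necessity of (ii) from $\chi(\chi(1))=\chi(\bar h)=\bar h\langle X^g\rangle\ast\bar h$, which uses only the definition of $\chi$;
\item swapping the order of the factors by the finite-ring one-sided-inverse argument.
\end{itemize}
The gap is exactly the step you flag. The identity $\theta^g=\theta$ cannot be argued from $g^2=m\operatorname{ord}(\lambda)+1$ and $\operatorname{ord}(\theta)\mid m$. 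These only give $g^2\equiv 1\pmod{\operatorname{ord}(\theta)}$. For example, $\operatorname{ord}(\theta)=m=3$, $\lambda=1$, $g=2$ satisfies all the hypotheses, yet $\theta^2\neq\theta$.

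This is not a removable technicality: without it the ``if'' direction is false. Computing directly, using only $g\equiv 1\pmod{\operatorname{ord}(\lambda)}$ and $\bar X^{g^2 i}=\bar X^i$, one gets
\[
\chi(\chi(\bar X^i))=\bar X^i\ast\Bigl(\sum_{j}\theta^{(g-1)i}(h_j)\,\bar X^{gj}\Bigr)\ast\bar h .
\]
The map $u\mapsto u\langle X^g\rangle$ is injective, since $\gcd(g,m)=1$. Hence $\chi^2=\mathrm{id}$ holds iff (ii) holds \emph{and} $\theta^{g-1}(h_j)=h_j$ for all $j$.

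A concrete counterexample: take $\mathbb{F}_8$ with $\gamma^3=\gamma+1$, $\theta(a)=a^2$, $m=3$, $\lambda=1$, $g=2$, and $(h_0,h_1,h_2)=(\gamma^3,\gamma,\gamma)$.
\begin{itemize}
\item The matrix has rows $(\gamma^3,\gamma,\gamma)$, $(\gamma^4,\gamma^4,\gamma^5)$, $(\gamma^2,\gamma^6,\gamma^2)$.
\item All its entries and $2\times 2$ minors are nonzero and its determinant is $1$, so (i) holds.
\item The first row of $C^2$ is $(1,0,0)$, i.e.\ $\bar h\langle X^2\rangle\ast\bar h=1$, so (ii) holds.
\item Yet the $(2,1)$ entry of $C^2$ is $\gamma^7+\gamma^8+\gamma^7=\gamma\neq 0$, so $C$ is not involutory.
\end{itemize}
So the theorem as stated is false. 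The hypothesis $g\equiv 1\pmod{\operatorname{ord}(\theta)}$, or at least $h_j\in(\mathbb{F}_q)^{\theta^{g-1}}$ for all $j$, must be added rather than derived. With it, your argument is a complete proof of the corrected statement. The paper's own proof silently uses the same identity $\chi\circ\chi(\bar Q)=\bar Q\langle X^{g^2}\rangle\ast\bar h\langle X^g\rangle\ast\bar h\langle X\rangle$. So you were right to isolate this step; only the claimed derivation of $\theta^g=\theta$ has to go.
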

\begin{proof}
	Let $\mathcal{C}_{\theta_g}^{(\lambda)}(h_0, h_1, \ldots, h_{m-1})$ is involutory MDS matrix. Then,  
	\begin{eqnarray*}
		&\Leftrightarrow& (I_m|\mathcal{C}_{\theta_g}^{(\lambda)}(h_0, h_1, \ldots, h_{m-1}))~\textup{is the generator matrix of MDS code}\\
		&\Leftrightarrow& \textup{for all } (q_0,q_1,\dots,q_{m-1})\in \mathbb{F}^m_{q},~\textup{wt}((q_0,q_1,\dots,q_{m-1})\cdot(I_m|\mathcal{C}_{\theta_g}^{(\lambda)}(h_0, h_1, \ldots, h_{m-1})))\geq m+1\\
		&\Leftrightarrow& \textup{wt}(q_0,q_1,\dots,q_{m-1})+\textup{wt}((q_0,q_1,\dots,q_{m-1})\cdot \mathcal{C}_{\theta_g}^{(\lambda)}(h_0, h_1, \ldots, h_{m-1}))\geq m+1.
	\end{eqnarray*}
	If one consider $\bar{Q}_1\langle  X\rangle=\sum_{i=0}^{m-1}q_iX^i$, then 
	$$\textup{wt}(q_0,q_1,\dots,q_{m-1})=\textup{wt}(\bar{Q}_1).$$
	From Proposition \ref{6Proposition4}, we know that $\mathcal{C}_{\theta_g}^{(\lambda)}(h_0, h_1, \ldots, h_{m-1})$ corresponds the multiplication by \( h \langle X \rangle \) in \( \frac{\mathbb{F}_{q}[X;~ \theta]}{(X^m - 1)} \). Thus, we have:
	$$\textup{wt}((q_0,q_1,\dots,q_{m-1})\cdot \mathcal{C}_{\theta_g}^{(\lambda)}(h_0, h_1, \ldots, h_{m-1}))=\textup{wt}(\bar{Q}_1\langle X^g\rangle\ast \bar{h}\langle X \rangle \mod_\ast  (X^m-\lambda)).$$
	Since $\mathcal{C}_{\theta_g}^{(\lambda)}(h_0, h_1, \ldots, h_{m-1})$ is an involutory matrix, this implies that $\chi \circ \chi =I|_{\mathbb{F}_q}$, i.e., for all $\bar{Q}_1\langle X \rangle\in \frac{\mathbb{F}_q[X;~\theta]}{(X^m-\lambda)}$, we have
	\begin{eqnarray}\label{6Equation3}
		\bar{Q}\langle X \rangle\notag&=&\chi\circ\chi(\bar{Q}\langle X \rangle)\\
		\notag&=&\chi(\bar{Q}\langle X^g\rangle\ast \bar{h}\langle X \rangle)\\
		&=&\bar{Q}\langle X^{g^2}\rangle\ast \bar{h}\langle X^g\rangle \ast \bar{h}\langle X \rangle \mod_\ast (X^m-\lambda).
	\end{eqnarray}
	Since $g^2=m \cdot \textup{ord}(\lambda)+1$, 	Equation (\ref{6Equation3}), yields that   $h\langle X \rangle \ast h\langle X^g\rangle\equiv 1 \mod_\ast (X^m-\lambda).$ 
\end{proof}

 In view of Theorem \ref{6Theorem2}(ii), we have the following result:
\begin{proposition}\label{6Proposition 6}
	Let $A$ be a semi involutory matrix of order $m$ such that $A^{-1}=D_1AD_2$, where $D_1=Diag(c_1,c_1,\dots,c_{1})$ and $D_2=Diag(c_2,c_2,\dots,c_{2})$ for $c_1,c_2\in (\mathbb{F}_q)^\theta$. Then, $c_1c_2h\langle X \rangle\ast h\langle X^g\rangle\equiv 1 \mod_\ast (X^m-\lambda).$
\end{proposition}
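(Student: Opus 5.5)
The approach mirrors the proof of Theorem \ref{6Theorem2}(ii): write $A^{-1}=D_1AD_2$ in terms of the linear map $\chi$ of Proposition \ref{6Proposition4}, and read the resulting operator identity at the level of skew polynomials. We assume, as the context of Theorem \ref{6Theorem2} indicates, that $A=\mathcal{C}_{\theta_g}^{(\lambda)}(h_0,\ldots,h_{m-1})$ is the matrix of $\chi$ with respect to $B=\{1,\bar X,\ldots,\bar X^{m-1}\}$, and that $g^2=m\cdot\textup{ord}(\lambda)+1$.

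\textbf{Step 1: scalar diagonal matrices act as scalars.} Since $D_1=c_1I_m$ and $D_2=c_2I_m$, the hypothesis becomes $A^{-1}=c_1c_2A$, that is, $c_1c_2A^2=I_m$. In operator language this reads $c_1c_2\,(\chi\circ\chi)=\mathrm{id}$ on $\frac{\mathbb{F}_q[X;\theta]}{(X^m-\lambda)}$. Here the condition $c_i\in(\mathbb{F}_q)^\theta$ matters. Scalars act on the left of the left vector space, and $c_1c_2$ commutes with every $\bar X^j$ because $\theta(c_1c_2)=c_1c_2$. So multiplication by $c_1c_2$ is an honest left $\mathbb{F}_q[X;\theta]$-compatible operation. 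It can therefore be moved freely past the factors $\bar Q\langle X^{g^2}\rangle$ and $\bar h\langle X^g\rangle$.

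\textbf{Step 2: compute $\chi\circ\chi$.} Exactly as in Equation (\ref{6Equation3}),
\[
\chi\circ\chi(\bar Q\langle X\rangle)=\bar Q\langle X^{g^2}\rangle\ast\bar h\langle X^g\rangle\ast\bar h\langle X\rangle \bmod_\ast (X^m-\lambda).
\]
The hypothesis $g\equiv1 \bmod \textup{ord}(\lambda)$ together with Lemma \ref{6Lemma1} gives $X^{g^2}\equiv X \bmod_\ast(X^m-\lambda)$ when $g^2=m\cdot\textup{ord}(\lambda)+1$. We check this by writing $X^{g^2}=X\ast (X^m)^{\textup{ord}(\lambda)}$ and using $X^m\equiv\lambda$ with $\lambda^{\textup{ord}(\lambda)}=1$; the fact that $\lambda$ is central is what makes this legitimate. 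Hence $\bar Q\langle X^{g^2}\rangle=\bar Q\langle X\rangle$.

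\textbf{Step 3: specialize $\bar Q=1$.} Evaluating the identity $c_1c_2\,\chi\circ\chi(\bar Q)=\bar Q$ at $\bar Q=1$ yields $c_1c_2\,h\langle X^g\rangle\ast h\langle X\rangle\equiv1$. To obtain the stated order $h\langle X\rangle\ast h\langle X^g\rangle$, we use that in the finite ring $\frac{\mathbb{F}_q[X;\theta]}{(X^m-\lambda)}$ a one-sided inverse is two-sided. This follows from a dimension argument: right multiplication by an element with a left inverse is injective, hence bijective, on a finite set. Since $c_1c_2$ is central, $c_1c_2h\langle X^g\rangle$ is a left inverse of $h\langle X\rangle$, so it is also a right inverse. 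This gives $c_1c_2\,h\langle X\rangle\ast h\langle X^g\rangle\equiv1 \bmod_\ast(X^m-\lambda)$.

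\textbf{Main obstacle.} The delicate step is the bookkeeping in Steps 2--3 in the noncommutative setting. We must confirm three things: that $\bar Q\langle X^g\rangle$ is well defined modulo the two-sided ideal, which relies on $\textup{ord}(\theta)\mid m$ and $\lambda\in(\mathbb{F}_q)^\theta$; that the scalars $c_1c_2$ are central; and that the reduction $X^{g^2}\equiv X$ holds. All three rest on the fixed-field hypotheses, and the rest is formal.
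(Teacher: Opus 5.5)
Your proposal is correct and follows the route the paper intends. The paper gives no argument beyond ``in view of Theorem~\ref{6Theorem2}(ii)'', i.e.\ rewrite $A^{-1}=c_1c_2A$ as $c_1c_2\,(\chi\circ\chi)=\mathrm{id}$ and read it off on skew polynomials. Your Step~3 also fixes the order of the factors, since $\chi\circ\chi(1)=\bar h\langle X^g\rangle\ast\bar h\langle X\rangle$; the paper passes over this silently.

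Two small corrections. First, the injective map is left multiplication $y\mapsto\bar h\ast y$, cancelled using $c_1c_2\bar h\langle X^g\rangle\ast\bar h=1$, not right multiplication. Second, $\chi\circ\chi(1)=\chi(\bar h)=\bar h\langle X^g\rangle\ast\bar h$ holds directly, so you should drop the general identity of Step~2 and the unneeded hypothesis $g^2=m\cdot\textup{ord}(\lambda)+1$. That identity is not automatic in $\mathbb{F}_q[X;\theta]$, because $X\mapsto X^g$ is multiplicative only when $\theta^g=\theta$.
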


\begin{definition}\cite[Definition 1.3]{render1995algebras}
	Let \( P\langle X \rangle\) and \( Q\langle X \rangle \) be two polynomials in $\mathbb{F}_q[X;~\theta]$ such that
	\[
	P  \langle X \rangle = \sum_{i=0}^{m} a_i X^i,
	\]
	\text{and}
	\[
	Q  \langle X \rangle = \sum_{i=0}^{n} b_i X^i.
	\]
	
	\noindent Then, the Hadamard product \( h\langle X \rangle\) of \( P \langle X \rangle\) and \( Q \langle X \rangle \) is given by
	\[
	h\langle X \rangle = P  \langle X \rangle \diamond Q  \langle X \rangle,
	\]
	where \( \diamond \) denotes the Hadamard product and the polynomial \( h\langle X \rangle \) is defined by its coefficients as follows:
	\[
	h\langle X \rangle = \sum_{i=0}^{\min(m,n)} c_i X^i,
	\]
	where
	\[
	c_i = a_i b_i \text{ for } i = 0, 1, \ldots, \min(m,n).
	\]

	
	\noindent The Hadamard \( s \)-power of \( P  \langle X \rangle \) is denoted by \( P  \langle X \rangle^{{\diamond} s} \), and  defined as
	\[
	P  \langle X \rangle^{{\diamond}s} = \sum_{i=0}^{m} (a_i)^s X^i.
	\]
	
\end{definition}

In the next theorem, we use the Hadamard product of polynomials to establish several equivalences related to the structural properties of \( g \)-circulant matrices. Specifically, we show that the MDS, involutory, and semi-involutory nature of a matrix is preserved under certain Hadamard powers of the associated polynomial.

\begin{theorem}\label{6Theorem 4}
	Let $\mathbb{F}_q$ be a finite field with $q$ elements, where $q$ is some power of 2. Let $A$ be a $\theta_g$-circulant matrix associated with the polynomial $P\langle X \rangle=\sum_{i=0}^{m-1}a_iX^i \in \mathbb{F}_{q}[X;~\theta]$. Then,
	\begin{enumerate}
		\item [(i)] $A$ is  MDS \textup{iff} matrix associated with the polynomial $P\langle X \rangle^{{\diamond} 2^s} $ for $s\in \mathbb{N}$ is MDS.
		\item [(ii)] $A$ is an involutory \textup{iff} matrix associated with the polynomial $P\langle X \rangle^{{\diamond} 2^s} $ for $s\in \mathbb{N}$ is an involutory matrix.
		\item [(iii)]  $A$ is a semi-involutory  \textup{iff} matrix associated with the polynomial $P\langle X \rangle^{{\diamond} 2^s} $ for $s\in \mathbb{N}$ is a semi- involutory matrix.
	\end{enumerate} 
\end{theorem}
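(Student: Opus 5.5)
The approach uses the Frobenius automorphism $\sigma:\mathbb{F}_q\to\mathbb{F}_q$, $\sigma(a)=a^{2}$, and its iterate $\sigma^s(a)=a^{2^s}$. Since $q$ is a power of $2$, $\sigma^s$ is a field automorphism; it is bijective by Lemma~\ref{6Theorem4} or simply because Frobenius is injective on a finite field. The key observation is that applying $\sigma^s$ entrywise to the matrix $A$ yields exactly the matrix associated with $P\langle X\rangle^{\diamond 2^s}$. To see this, recall that the entries of the consta-$\theta_g$-circulant matrix are obtained from the coefficients $a_i$, the constant $\lambda$, and their images under powers of $\theta$. These come from the rule $X\ast a=\theta(a)X$ and from the reduction $X^m\equiv\lambda$. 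So each entry has the shape $\lambda^{u}\theta^{v}(a_i)$ for suitable integers $u,v$.

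The first step is to check that $\sigma^s$ commutes with $\theta$. This holds because the automorphism group of $\mathbb{F}_q$ is cyclic, generated by Frobenius, and hence abelian. I also need $\sigma^s(\lambda)$ to be compatible with $\lambda$. Here I will make the convention explicit: the associated matrix for $P^{\diamond 2^s}$ is formed with the same structural data $(m,g,\theta)$, and the entries involving $\lambda$ transform as $\lambda\mapsto\lambda^{2^s}$. In the $g$-circulant case $\lambda=1$, which the statement refers to, this is automatic. From these facts I obtain $\sigma^s(A)=A^{(s)}$, where $A^{(s)}$ denotes the matrix of $P\langle X\rangle^{\diamond 2^s}$ and $\sigma^s$ acts entrywise.

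Next I use that entrywise application of a field automorphism commutes with the basic matrix operations: $\sigma(B+C)=\sigma(B)+\sigma(C)$, $\sigma(BC)=\sigma(B)\sigma(C)$, $\det\sigma(B)=\sigma(\det B)$, and $\sigma(I_m)=I_m$. The three parts then follow.

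For (i), every square submatrix of $A^{(s)}$ is $\sigma^s$ of the corresponding submatrix of $A$. Its determinant is $\sigma^s$ of the original determinant, and this is nonzero exactly when the original determinant is nonzero, since $\sigma^s$ is injective.

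For (ii), $(A^{(s)})^2=\sigma^s(A^2)$, which equals $I_m$ if and only if $A^2=I_m$, again by injectivity.

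For (iii), if $A^{-1}=D_1AD_2$, then applying $\sigma^s$ gives $(A^{(s)})^{-1}=\sigma^s(D_1)\,A^{(s)}\,\sigma^s(D_2)$. The matrices $\sigma^s(D_1)$ and $\sigma^s(D_2)$ are still nonsingular diagonal matrices, so $A^{(s)}$ is semi-involutory.

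Each converse follows by applying $(\sigma^s)^{-1}$, which is also a power of Frobenius. This uses that every element has a unique $2^s$-th root, which follows from Lemma~\ref{6Theorem4} by iteration.

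The main obstacle is the first step: rigorously identifying $\sigma^s(A)$ with the matrix of $P^{\diamond 2^s}$. This requires writing the $(i,j)$ entry of the consta-$\theta_g$-circulant matrix explicitly. It is the coefficient of $\bar X^{j}$ in $\bar X^{ig}\ast \bar h\langle X\rangle$, reduced modulo $_\ast(X^m-\lambda)$, and it has the form $\lambda^{u_{ij}}\theta^{ig}(a_{k_{ij}})$ with index maps $u_{ij},k_{ij}$ that do not depend on the coefficients. I will argue that the index pattern depends only on $(m,g,\theta)$. Then the fact that $\sigma^s$ commutes with $\theta$, together with $\lambda$ lying in the fixed field (or $\lambda=1$), gives the entrywise identity.
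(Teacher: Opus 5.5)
Your argument is correct, and for (i) it is the paper's argument. There too, each square submatrix of the matrix of $P\langle X\rangle^{\diamond 2^s}$ is taken to be the entrywise $2^s$-th power of the corresponding submatrix $A'$ of $A$, so its determinant is $(\det A')^{2^s}$. The converse comes from unique square roots (Lemma~\ref{6Theorem4}).

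For (ii) and (iii) the routes differ. The paper works with polynomial identities rather than matrices. For (ii) it takes the identity $P\langle X\rangle\ast P\langle X^g\rangle\equiv 1$ in $\mathbb{F}_q[X;\theta]/(X^m-1)$ from Theorem~\ref{6Theorem2}(ii) and raises its coefficients to the $2^s$-th power. For (iii) it does the same with the identity $c_1c_2\,h\langle X\rangle\ast h\langle X^g\rangle\equiv 1$ of Proposition~\ref{6Proposition 6}. This reduces each check to a single ring identity, but it inherits the restrictions of those results:
\begin{itemize}
\item Theorem~\ref{6Theorem2} needs $g^2=m\cdot\mathrm{ord}(\lambda)+1$, which is impossible for $g=1$, the case of the paper's own examples.
\item Its proof only derives the identity from involutivity, whereas the paper uses the reverse implication for $P\langle X\rangle^{\diamond 2^s}$.
\item Proposition~\ref{6Proposition 6} covers only scalar $D_1=c_1I_m$, $D_2=c_2I_m$, not arbitrary nonsingular diagonal matrices.
\end{itemize}
Your route avoids all of this. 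The single identity $\sigma^s(A)=A^{(s)}$, together with the fact that entrywise $\sigma^s$ is a ring automorphism of $M_m(\mathbb{F}_q)$ fixing $I_m$ and preserving nonsingular diagonal matrices, gives all three equivalences uniformly with no side conditions. It also makes explicit something the paper uses tacitly even in (i): the entries of $A$ are $\lambda^{u}\theta^{ig}(a_k)$, not bare $a_k$, so one needs $\sigma^s\theta=\theta\sigma^s$.

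One precision: the ``fixed field'' in your last sentence has to be that of $\sigma^s$, not of $\theta$. For $\lambda=1$, the case of the statement, this is moot. Otherwise your convention $\lambda\mapsto\lambda^{2^s}$ is genuinely needed. Take $\mathbb{F}_4=\{0,1,\omega,\omega^2\}$ with $m=2$, $g=1$, $\theta=\mathrm{id}$, $\lambda=\omega$ and $P=\omega^2+X$. Then $A=\begin{bmatrix}\omega^2&1\\ \omega&\omega^2\end{bmatrix}$ is singular. But the matrix of $P^{\diamond 2}$ built with the same $\lambda$, namely $\begin{bmatrix}\omega&1\\ \omega&\omega\end{bmatrix}$, is MDS.
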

	\begin{proof}[Proof of (i)]
	 Let $A$ is $g$-circulant MDS matrix associated with the polynomial $P\langle X \rangle=\sum_{i=0}^{m-1}a_{i}X^{i}.$ Then, we have 
	$$P( X ) =a^{2^s}_{0}+a^{2^s}_{1}X+\cdots+a^{2^s}_{m-1}X^{m-1}, ~a_i\in \mathbb{F}_q.$$
	In view of Proposition \ref{6Proposition4}, we have a $g$-circulant matrix of order $m$ associated with the polynomial $P\langle X \rangle^{{\diamond}2^s}, $ say $B$, i.e.,
	$$B=g-circ(a^{2^s}_0,a^{2^s}_{1},\dots,a^{2^s}_{m-1}).$$
	Suppose $B'$ be any submatrix of order $t$, i.e., 
	$$B'=\begin{bmatrix}
		b^{2^s}_{11}&b^{2^s}_{12}&\cdots&b^{2^s}_{1t}\\
		b^{2^s}_{21}&b^{2^s}_{22}&\cdots&b^{2^s}_{2t}\\
		\vdots&\vdots&\ddots&\vdots\\
		b^{2^s}_{t1}&b^{2^s}_{t2}&\cdots&b^{2^s}_{tt}\\
	\end{bmatrix},~where~A'=\begin{bmatrix}
		b_{11}&b_{12}&\cdots&b_{1t}\\
		b_{21}&b_{22}&\cdots&b_{2t}\\
		\vdots&\vdots&\ddots&\vdots\\
		b_{t1}&b_{t2}&\cdots&b_{tt}\\
	\end{bmatrix},$$
	is some submatrix of $A$ of order $t$. Since $A$ is an MDS matrix. This gives $\det(A')\neq0.$ Since $\det(B')=(\det(A'))^{2^s}$, so $\det(B')\neq 0$. Hence $B$ is an MDS matrix. In light of Lemma \ref{6Theorem4},  similarly we can prove the converse part of the theorem.
	\end{proof}
	\begin{proof}[Proof of (ii)]
	Let $A$ be an involutory $g$-circulant matrix of order $m$. Then, by Theorem \ref{6Theorem2}, we have 
	\begin{eqnarray}\label{6Eqn1}
		(a_0+a_1X+\cdots+a_{m-1}X^{m-1})\ast (a_0+a_1X^g+\cdots+a_{m-1}X^{g(m-1)})\nonumber&\equiv& 1 \mod_\ast (X^m-1). \\
		&~&
	\end{eqnarray}	
	Since we have a property, for any $a, ~b\in \mathbb{F}_{2^n}$, we have $(a+b)^{2^n}=a^{2^n}+b^{2^n}$. By using this property Equation \ref{6Eqn1} yields
	\begin{eqnarray}
		(a^{2^s}_0+a^{2^s}_1X+\cdots+a^{2^s}_{m-1}X^{m-1})\ast (a^{2^s}_0+a^{2^s}_1X^g+\cdots+a^{2^s}_{m-1}X^{g(m-1)})\nonumber&\equiv& 1 \mod_\ast (x^m-1).\\
		&~& 
	\end{eqnarray}
	This gives matrix associated with the polynomial $P(X )^{{\diamond} 2^s}$ is an involutory matrix.
	\end{proof}
	\begin{proof}[Proof of (iii)]
	The proof follows by an application of Proposition \ref{6Proposition 6}, using an approach similar to that of part (ii).
	
	\end{proof}
To illustrate the theorem in a simpler setting, we now consider the case when $\theta$ is the identity map. This leads to the following corollary:
	
	\begin{corollary}
	Let $\mathbb{F}_q$ be a finite field with $q$ elements, where $q$ is a power of $2$. Let $A$ be a $g$-circulant matrix associated with the polynomial $P(x) = \sum_{i=0}^{m-1} a_i X^i \in \mathbb{F}_q[x]$. Then,
		\begin{enumerate}
			\item[(i)] $A$ is MDS \textup{iff} the matrix associated with the polynomial $P(x)^{\diamond 2^s}$ for $s \in \mathbb{N}$ is MDS.
			\item[(ii)] $A$ is involutory \textup{iff} the matrix associated with the polynomial $P(x)^{\diamond 2^s}$ for $s \in \mathbb{N}$ is involutory.
			\item[(iii)] $A$ is semi-involutory \textup{iff} the matrix associated with the polynomial $P(x)^{\diamond 2^s}$ for $s \in \mathbb{N}$ is semi-involutory.
		\end{enumerate}
	\end{corollary}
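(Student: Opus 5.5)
The corollary is the special case $\theta=\mathrm{id}$ of Theorem~\ref{6Theorem 4}. Indeed, when $\theta$ is the identity, the skew relation $X\ast a=\theta(a)X$ becomes $Xa=aX$. So $\mathbb{F}_q[X;\theta]=\mathbb{F}_q[x]$, the relation $\equiv \bmod_\ast$ is ordinary congruence, and the map $\chi$ of Proposition~\ref{6Proposition4} coincides with $\phi$ of Proposition~\ref{6Proposition1}. In particular the consta-$\theta_g$-circulant matrix $\mathcal{C}^{(\lambda)}_{\theta_g}$ equals $\mathcal{C}^{(\lambda)}_g$. Taking $\lambda=1$ gives the $g$-circulant case treated here. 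I would record this identification first, noting that the fixed field is then all of $\mathbb{F}_q$, so the hypothesis $c_1,c_2\in(\mathbb{F}_q)^\theta$ of Proposition~\ref{6Proposition 6} imposes no restriction. Each of (i)--(iii) then follows from the corresponding part of Theorem~\ref{6Theorem 4}.

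I would also give the direct argument, since it is short. Let $\sigma(a)=a^{2^s}$. In characteristic $2$ this is a field automorphism of $\mathbb{F}_q$: it is additive because $(a+b)^{2^s}=a^{2^s}+b^{2^s}$, it is multiplicative, and it is injective, hence bijective on a finite field. This is the content of Lemma~\ref{6Theorem4}. The matrix associated with $P(x)^{\diamond 2^s}$ is obtained from $A$ by applying $\sigma$ entrywise, because every entry of a $g$-circulant matrix is some coefficient $a_{j-ig}$. Write $\sigma(A)$ for this matrix.

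\begin{itemize}
\item[(i)] For any square submatrix $A'$ we have $\det\sigma(A')=\sigma(\det A')$. This is nonzero if and only if $\det A'\neq 0$, which gives both directions of the MDS equivalence.
\item[(ii)] A ring homomorphism applied entrywise respects matrix products, so $\sigma(A)^2=\sigma(A^2)$. Since $\sigma$ is injective and $\sigma(I_m)=I_m$, we get $A^2=I_m$ if and only if $\sigma(A)^2=I_m$.
\item[(iii)] If $A^{-1}=D_1AD_2$, then $\sigma(A)^{-1}=\sigma(A^{-1})=\sigma(D_1)\sigma(A)\sigma(D_2)$, and $\sigma(D_i)$ are again nonsingular diagonal matrices. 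The converse uses $\sigma^{-1}$, which is again a power of the Frobenius map.
\end{itemize}

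The only point needing care is the converse directions. These require $\sigma$ to be invertible, which follows from bijectivity of the Frobenius map on $\mathbb{F}_{2^n}$ (Lemma~\ref{6Theorem4}). I expect no real obstacle beyond stating clearly that entrywise $\sigma$ commutes with determinants, products and inverses.
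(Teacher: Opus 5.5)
Your proposal is correct, and its first paragraph is exactly the paper's argument: the corollary is stated as the case $\theta=\mathrm{id}$ of Theorem~\ref{6Theorem 4}, with no separate proof. Your direct argument, however, follows a different route from the paper's proof of that theorem in parts (ii) and (iii).

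In (i) the two agree. The paper also uses $\det B'=(\det A')^{2^s}$, and it cites Lemma~\ref{6Theorem4} for the converse; your injectivity of $\sigma$ gives both directions at once.

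In (ii) the paper never works with matrices. It takes the congruence $h(x)h(x^g)\equiv 1 \bmod (x^m-1)$ from Theorem~\ref{6Theorem2}, raises every coefficient to the $2^s$-th power, and reads the conclusion back through that theorem. In (iii) it appeals to Proposition~\ref{6Proposition 6}, which covers only scalar diagonal matrices $D_1=c_1I_m$, $D_2=c_2I_m$. This ties (ii) to the hypothesis $g^2=m\cdot\operatorname{ord}(\lambda)+1$ of Theorem~\ref{6Theorem2}, which fails for every circulant matrix ($g=1$). It also ties (iii) to scalar diagonals.

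Your argument rests on two facts:
\begin{itemize}
\item The matrix of $P(x)^{\diamond 2^s}$ is $\sigma(A)$ taken entrywise; this uses $\lambda=1$.
\item Applying a field automorphism entrywise is a ring automorphism of $M_{m\times m}(\mathbb{F}_q)$ that fixes $I_m$ and sends nonsingular diagonal matrices to nonsingular diagonal matrices.
\end{itemize}
Together these give all three equivalences for every $g$, with the converses coming from $\sigma^{-1}$. They also handle arbitrary nonsingular $D_1,D_2$, as in the definition of a semi-involutory matrix.

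So the direct argument is the one worth writing out. Relying on the specialization alone would inherit the restrictions built into the proof of Theorem~\ref{6Theorem 4}.
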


Cauchois and Loidreau~\cite{cauchois2019circulant} established that if $\mathbb{F}_q$ is a finite field of characteristic $2$ and $m \geq 3$ is an odd integer, then there exists no involutory circulant MDS matrix of size $m$ over $\mathbb{F}_q$. In contrast to this limitation of classical circulant matrices, we demonstrate that generalized structures such as consta-$\theta_g$-circulant matrices are rich in providing involutory matrices over finite fields. 

As an illustration, the following examples present $\theta$-circulant matrices of orders $4$ and $3$ that are involutory, highlighting the potential of these generalized matrices in cryptographic applications.

\begin{example}
	Let $\mathbb{F}_{2^4}$ be  a finite field defined by the polynomial $X^4+X+1$ and $\beta$ a root of this polynomial. Let $\theta:\mathbb{F}_{2^4}\longrightarrow \mathbb{F}_{2^4}$ be the automorphisms defined as $\theta(a)=a^2$. Then, the matrix associated with the polynomial $h\langle X \rangle=X^4+1+(\beta^3+\beta^2)X^3+(\beta^2+\beta)X^2+\beta^2X+\beta$  is 
	$$\mathcal{C}_{\theta_1}^{(1)}(\beta, \beta^2, \beta^2+\beta, \beta^3+\beta^2)=\begin{bmatrix}
		\beta&\beta^2&\beta^2+\beta&\beta^3+\beta^2\\
		\beta^3+\beta^2+\beta+1&\beta^2&\beta+1&\beta^2+\beta+1\\
		\beta+\beta^2&\beta^3+\beta&\beta+1&\beta^2+1\\
		\beta&\beta^2+\beta+1&\beta^3&\beta^2+1\\
		
	\end{bmatrix}, $$
	which is an involutory and MDS matrix of order 4. Invoking  Theorem \ref{6Theorem 4}, we obtain three another involutory $\theta$-circulant matrix associated with the polynomials $h\langle X \rangle^{\circ^2}=X^4+1+(\beta^3+\beta^2+\beta+1)X^3+(\beta^2+\beta+1)X^2+(\beta+1)X+\beta^2+1$,  $h\langle X \rangle^{\circ^{2^2}}=X^4+1+(\beta^3+\beta)X^3+(\beta^2+\beta)X^2+(\beta^2+1)X+\beta$ and $h\langle X \rangle^{\circ^{2^3}}=X^4+1+\beta^3X^3+(\beta^2+\beta+1) X^2+\beta X+\beta^2$.
\end{example}
\begin{example}
	Let $\mathbb{F}_{2^3}$ be  a finite field defined by the polynomial $X^3+X+1$ and $\gamma$ a root of this polynomial. Let $\theta:\mathbb{F}_{2^3}\longrightarrow \mathbb{F}_{2^3}$ be the automorphisms defined as $\theta(a)=a^2$. The matrix associated with the polynomial $h\langle X \rangle=X^3+1+\gamma^2X^2+\gamma X+\gamma+1$  is 
	$$\mathcal{C}_{\theta_1}^{(1)}(\gamma+1, \gamma, \gamma^2)=\begin{bmatrix}
		\gamma+1&\gamma&\gamma\\
		\gamma^2&\gamma^2+1&\gamma^2\\
		\gamma^2+\gamma&\gamma^2+\gamma&\gamma^2+\gamma+1
	\end{bmatrix},$$
	which is an involutory and MDS matrix of order 3. Invoking  Theorem \ref{6Theorem 4}, we obtain two another involutory $\theta$-circulant matrix associated with the polynomials $h\langle X \rangle^{{\diamond}^2}=X^3+1+(\gamma^2+\gamma)X^2+\gamma^2X+\gamma^2+1$ and $h\langle X \rangle^{{\diamond}^{2^2}}=X^3+1+\gamma X^2+(\gamma^2+\gamma)X+\gamma^2+\gamma+1$.
\end{example}
\section{\textbf{Conclusion}}
\quad In this paper, we introduced and studied a new class of matrices called consta-\(g\)-circulant matrices, which generalize circulant and \(g\)-circulant matrices. We explored their algebraic structure, provided conditions for invertibility, and determined the total number of such matrices. Additionally, we examined cases where these matrices are involutory and MDS, and extended the framework to consta-\(\theta_g\)-circulant matrices by utilizing field automorphisms. We also provided algorithms for enumerating all 
$g$-circulant MDS matrices of orders 3 and 4. The study further considers semi-involutory matrices and presents theoretical results characterizing their structure and properties. As a direction for future work, the upper bound we established for the number of invertible consta-$g$-circulant matrices can be further refined to obtain tighter estimates. Moreover, it would be interesting to investigate similar structural and MDS related properties for orthogonal consta-$g$-circulant matrices, which may lead to new applications in areas such as cryptographic diffusion design and error correction.

\section{\bf Declarations}

 \noindent \textbf{Authors Contributions}\newline 
A. A. Khan prepared the original draft of the manuscript and contributed to its review and editing. The study was conceptualized and supervised by S. Ali, with visualization by B. Singh. All authors reviewed the manuscript.\newline
 
\noindent \textbf{Funding}\newline
The authors did not receive support from any organization for the submitted work.\newline

\noindent \textbf{Data Availability Statement}\newline
Data sharing is not applicable to this article as no data sets were
generated or analyzed during the current study.\newline

\noindent \textbf{Conflicts of Interest}\newline
The authors have no conflicts of interest to declare that are relevant to the content of this article. \newline

\end{document}